%% file: main.tex
\documentclass[letterpaper,twocolumn,10pt]{article}
\usepackage{usenix,epsfig,endnotes}
\newif\ifpublish
\publishtrue

\newif\iflongversion
\longversionfalse

\input{macros.tex}

\def\BibTeX{
  {\rm 
    B\kern-.05em\textsc{i\kern-.025em b}
    \kern-.08em T\kern-.1667em\lower.7ex\hbox{E}\kern-.125emX
  }
}

\makeatletter
\newcommand{\linebreakand}{
  \end{@IEEEauthorhalign}
  \hfill\mbox{}\par
  \mbox{}\hfill\begin{@IEEEauthorhalign}
}
\makeatother

\begin{document}

\title{
  Finding \sysname: CFT DAG-based Consensus in the WAN
}

\author{
\hypersetup{hidelinks}
  Rithwik Kerur\thanks{Equal contribution.} \\
  UCSB \\ 
  rkerur@ucsb.edu \\
  \and
  Pasindu Tennage\footnotemark[1] \\
  Digital Asset and EPFL \\
  pasindu.tennage@gmail.com \\
  \and 
   Philipp Jovanovic \\
   Mysten Labs and UCL \\
   p.jovanovic@ucl.ac.uk \\
  \and 
  Dahlia Malkhi \\ 
  UCSB \\
  dahliamalkhi@ucsb.edu \\
  \and
  Alberto Sonnino \\ 
  Mysten Labs and UCL \\
  alberto@mystenlabs.com
   \and 
   Igor Zablotchi \\
   Mysten Labs \\
   igor@mystenlabs.com 
}

\maketitle

\thispagestyle{plain}
\pagestyle{plain}

\begin{abstract}
  \input{sections/abstract.tex}

\end{abstract}

\input{sections/introduction.tex}

\input{sections/model.tex}

\input{sections/dissemination.tex}
\input{sections/consensus.tex}
\input{sections/proofs.tex}
\input{sections/implementation.tex}
\input{sections/evaluation.tex}
\input{sections/related-work.tex}
\ifpublish
\else
  \input{acks.tex}
\fi

\newpage
\bibliographystyle{plain}
\bibliography{references-new}

\end{document}

%% file: macros.tex
\usepackage{cite}
\usepackage{amsmath}
\usepackage{amssymb}
\usepackage{amsfonts}
\usepackage{graphicx}
\usepackage{textcomp}
\usepackage{xcolor}
\usepackage{amsmath}
\usepackage{color}
\usepackage{url}
\usepackage{xspace}
\usepackage{tikz}
\usepackage{graphicx}
\usepackage{subcaption}
\usepackage{algorithm}
\usepackage{multicol}
\usepackage[flushleft,alwaysadjust]{paralist}
\usepackage{listings}
\usepackage[compact]{titlesec}
\usepackage[dvipsnames]{xcolor}
\usepackage{amsthm}
\usepackage{newtxtext}
\usepackage{newtxmath}
\usepackage[noend]{algpseudocode}
\usepackage{algorithmicx}
\usepackage[capitalize]{cleveref}
 \usepackage{enumitem}

\ifpublish
    \newcommand{\missing}[1]{}
    \newcommand{\alberto}[1]{}
    \newcommand{\pasindu}[1]{}
    \newcommand{\philipp}[1]{}
    \newcommand{\DM}[1]{}
    \newcommand{\Rithwik}[1]{}
    \newcommand{\igor}[1]{}
\else
    \newcommand{\missing}[1]{\textcolor{red}{#1}}
    \newcommand{\alberto}[1]{\textcolor{orange}{\textbf{Alberto:} #1}}
    \newcommand{\igor}[1]{\textcolor{blue}{\textbf{Igor:} #1}}
    \newcommand{\pasindu}[1]{\textcolor{green}{\textbf{Pasindu:} #1}}
    \newcommand{\philipp}[1]{\textcolor{purple}{\textbf{Philipp:} #1}}
    \newcommand{\DM}[1]{\textcolor{magenta}{\textbf{Dahlia:} #1}}
    \newcommand{\Rithwik}[1]{{\color{red} Rithwik: #1}}
\fi

\newcommand{\pparagraph}[1]{\vskip 0.5em \noindent \emph{#1.}}
\newcommand{\para}[1]{\pparagraph{#1}}

\hypersetup{colorlinks=true,linkcolor=Maroon,citecolor=blue,urlcolor=blue!60!black}

\newcommand{\algsize}{\scriptsize}

\newcommand{\eg}{\textit{e.g.,} }
\newcommand{\ie}{\textit{i.e.,} }

\newtheorem{observation}{Observation}

\newtheorem{thm}{Theorem}
\newtheorem{lemma}{Lemma}

\theoremstyle{definition}

\titlespacing*{\section}{0pt}{*1}{*1}
\titlespacing*{\subsection}{0pt}{*1}{*1}
\titlespacing*{\subsubsection}{0pt}{*1}{*1}

\newcommand{\sysname}{\textsc{Nemo-Nemo}\xspace}
\newcommand{\mysticeti}{Mysticeti\xspace}

\newcommand{\store}{\ensuremath{DAG}\xspace}

\newcommand{\propose}{\textsf{Propose}\xspace}

\newcommand{\decide}{\textsf{Decide}\xspace}

\newcommand{\scommit}{\texttt{commit}\xspace}
\newcommand{\sskip}{\texttt{skip}\xspace}
\newcommand{\sundecided}{\texttt{undecided}\xspace}

%% file: sections/abstract.tex
This paper introduces Nemo-Nemo, a practical crash-fault tolerant (CFT) consensus protocol designed to outperform existing protocols in wide-area networks by bridging design principles from the CFT and Byzantine-fault tolerant (BFT) worlds.
By structuring command propagation through a causally ordered DAG, Nemo-Nemo allows all consensus replicas to propose commands with a naturally self-regulating communication regime. 
By exploiting multi-leader architecture, Nemo-Nemo avoids the performance bottleneck inherent to single-leader protocols. 
By separating command dissemination from consensus logic, Nemo-Nemo handles challenging network conditions even when consensus commits are stalled.  
Moreover, leader proposals that miss a deadline are never dropped, but deterministically deferred and executed later, preserving throughput under transient network delays.
And by enabling Nemo-Nemo to commit on a DAG in just two network hops, it matches the latency of existing CFT systems, while achieving significantly higher throughput.
The result is a robust, deployable system: the first DAG-based CFT consensus protocol proven to exceed state-of-the-art wide-area network performance in both speed and resilience.

%% file: sections/introduction.tex
\section{Introduction}\label{sec:introduction}

Byzantine fault-tolerant (BFT) consensus systems have seen significant advances, driven by a multi-trillion dollar blockchain industry.
These advances have yielded powerful techniques, notably DAG-based consensus architectures, that enable high throughput, low latency, and graceful handling of both failures and network instabilities (e.g.,~\cite{narwhal-tusk, bullshark, cordial-miners, mysticeti2023} and others). Meanwhile, crash fault-tolerant (CFT) consensus remains the workhorse for traditional distributed systems, yet has not benefited equally from these innovations.

This paper bridges that gap. We argue that CFT systems designed to operate in the WAN can and should adopt lessons from modern BFT consensus, while adapting them to exploit the simpler model of crash-only failures. Specifically:

\begin{enumerate}

    \item \textbf{DAG-based architecture:} We can boost performance and alleviate network and failure-induced hiccups by adopting the DAG-based approach, which provides a self-regulating mempool and seamless failure recovery.

    \item \textbf{Battle-tested implementations:} Rather than building from scratch, we can adapt production-grade BFT codebases to CFT settings.

    \item \textbf{Unstable network testing:} Even without Byzantine replicas, we must test CFT systems under unstable network conditions, such as shifting connected majorities and leaders, which partial synchrony alone does not capture.

\end{enumerate}

This paper presents \sysname, the first DAG-based CFT protocol. \sysname applies these principles to achieve the best performance among known CFT protocols in wide-area network settings under both partial synchrony\cite{dwork1988consensus} and random asynchronous network\cite{danezis2025byzantine} models.

\para{Vision and motivation}
Modern DAG-based BFT consensus systems offer two central lessons that form the foundation of \sysname. The first is the value of a \textit{causally ordered, and self-regulating mempool}
A mempool functions as a transport layer for disseminating transaction requests, allowing replicas to inject transactions in parallel. In a DAG-based mempool, transactions are bundled into blocks that maintain causal order. This design achieves three goals: (1) it induces a \textit{self-regulating, round-based} communication pattern where replicas wait for a threshold of blocks before advancing rounds; (2) each block automatically carries its causal history, ensuring consistent DAG views across correct replicas; and (3) progress continues seamlessly despite leader failures, enabling rapid recovery when new leaders are promoted.

In principle, any consensus protocol can sit atop such a DAG mempool, and many BFT protocols leverage this idea: Narwhal-HS~\cite{narwhal-tusk} integrates HotStuff \cite{hotstuff} with a DAG and achieves roughly a $50 \times$ performance gain over bare HotStuff, while Autobahn~\cite{giridharan2024autobahn} integrates PBFT \cite{castro99practical} with a DAG and similarly attains significant throughput improvements.

Beyond a DAG-based mempool alone, the second lesson is that consensus protocols can be embedded directly into the DAG structure by pre-designating a \textit{skeleton} of leader proposals and treating blocks that reference those proposals as protocol votes~\cite{baird2020hashgraph}. In this view, much of the logic of traditional consensus protocols collapses into simple structural rules on the DAG. For example, simple and direct commit logic can be realized by $f+1$ blocks referencing a leader proposal~\cite{fino,bullshark-partial-sync}.

\para{Overview of \sysname}
For the DAG-mempool, a key design choice concerns whether DAG blocks must be explicitly certified for availability by a quorum before insertion\cite{Nayak-DAG}. Several BFT DAGs, such as Tusk~\cite{narwhal-tusk} and Bullshark~\cite{bullshark}, use certified blocks, motivated by the need for non-equivocation and data availability in adversarial environments. However, pre-certifying each block adds communication rounds on top of those required for consensus, increasing latency. Alternative BFT DAGs, including Cordial Miners~\cite{cordial-miners} and Mysticeti~\cite{mysticeti2023}, avoid pre-certification to reduce latency.

Prioritizing low latency, \sysname uses non-certified blocks. Note that in CFT settings, equivocation is not a concern, hence the benefit of pre-certification is diminished whereas admitting non-certified blocks enables extremely low-latency. 

The DAG \textit{direct} commit rule of \sysname is straightforward: a skeleton block becomes \textit{directly} committed if referenced by $f+1$ blocks in the next round (see \Cref{fig:rounds}).
The crux is maintaining safety against failures across replicas, each distinctly evolving the DAG due to network scheduling and omissions. When a skeleton slot is not directly committed, it can become \textit{indirectly} committed by a future skeleton slot using the (causally ordered) DAG structure itself (see \Cref{fig:example}). Notably, there is no explicit view-change by which a future leader may reinstate the proposed block. Rather, a DAG rule, which is somewhat subtle, is employed to indirectly commit or skip the undecided slot; it is detailed in the body of the paper.

This design enhances state-of-the-art CFT consensus in several non-obvious respects. First, each round can inject new proposals in the same round as voting, without waiting for confirmation that earlier proposals have committed. Second, each round can rotate leaders while preserving the same structure and latency as a stable-leader regime, without requiring an explicit leader-replacement (\textit{view-change}) protocol. Third, each round may include multiple skeleton slots for proposals, enabling greater flexibility and throughput (see \Cref{fig:example}).

An additional advantage becomes apparent under unstable conditions, such as transient network delays. Consensus protocols face a well-known tradeoff when configuring leader timeouts: timeouts that are too short risk discarding correct leader proposals, while timeouts that are too long can stall the system upon leader failure or hang-up.
\sysname mitigates this tradeoff by enabling shorter leader timeouts without compromising fairness or throughput. Leader proposals are never dropped; instead, they are deferred, typically to the immediately succeeding round, allowing the system to progress smoothly despite temporary leader disruptions.

\para{Implementing \sysname}
To validate our vision,
we wanted to employ mature code and optimize it for low latency.
We started with Mysticeti, a BFT protocol deployed in production by multiple blockchains~\cite{sui-code,iota-mysticeti,ika-mysticeti}.

Applying Mysticeti to CFT involved both obvious modifications and structural opportunities that fundamentally improve performance, including the following:
(i) changing quorum thresholds from $2f+1$ to $f+1$, (ii) eliminating an entire communication round and simplifying commit rules 
to enable Nemo-Nemo to commit on a DAG in just two network hops, like existing CFT systems,
 (iii) simplifying authentication by removing signatures and integrity checks, thus streamlining command ingestion without Byzantine peer concerns.

\para{Unstable Network Testing} 
Finally, in addition to evaluating performance under standard, partial synchrony conditions, we address an existing gap in the evaluation methodology for CFT consensus systems.
%
Existing CFT systems are designed to cope with a minority of stragglers, progressing rather well with a stable, well-connected majority. However, they suffer a performance degradation when network conditions are unstable and the fast majority is constantly ``shifting''.
To address this, in addition to traditional evaluations, we developed a framework to evaluate protocols under a recently proposed \textit{random asynchronous model}, which assumes message delivery delays are random but not adversarially controlled. This model captures realistic network variability—where packets may be delayed or reordered—without assuming malicious nodes. Importantly, this model reveals performance characteristics invisible under standard partial synchrony assumptions, where protocols are typically evaluated with stable, low-latency networks. We detail how we simulate this network model in \Cref{sec:implementation}


Our evaluations demonstrate that existing CFT protocols behave quite differently under randomized network delays compared to stable networks. \sysname, by contrast, maintains robust performance across both settings.This validates our motivating insight: by decoupling block propagation from consensus, \sysname remains robust under the random asynchronous model as the DAG continues to grow even in the absence of commits. It contrasts with traditional CFT protocols, which couple these processes, forcing block production to halt until consensus is achieved.

\para{Contributions}
We summarize our contributions as follows:

\begin{itemize}
    \item We present \sysname, the first DAG-based CFT consensus protocol. \sysname achieves the best performance among known CFT protocols in WAN settings under both partial synchrony and random asynchronous network models.

    \item We show how, unlike traditional CFT protocols that require explicit view-change mechanisms, \sysname integrates leader rotation and multi-proposer rounds directly into the DAG structure, eliminating protocol complexity and latency overhead.

    \item We exploit CFT's lack of equivocation to admit non-certified blocks into the DAG, achieving commit latencies unattainable in BFT settings. This design choice, along with optimized commit rules and block ingestion, enables \sysname to commit in just two network hops, like existing CFT systems, while achieving a significantly higher saturation throughput of at least 2x.

    \item In addition to the standard partially synchronous model, we also evaluate \sysname under the random asynchronous model, which models unstable network conditions. Our results show that existing CFT protocols degrade significantly under this model, while \sysname maintains robust performance.
\end{itemize}

%% file: sections/model.tex
\section{System Model and Assumptions} \label{sec:model}

\sysname is a peer-to-peer message-passing system which implements atomic broadcast~\cite{cachin2011introduction}, with the following properties:
(1) \textit{Validity}: If a correct replica $p$ proposes a value $v$, then $p$ eventually
commits $v$; (2) \textit{No duplication}: No value is committed more than once; (3) \textit{No creation}: If a replica commits a value $v$ with proposer $s$, then $v$ was previously proposed by replica $s$; (4) \textit{Agreement}: If a value $v$ is committed by some correct replica, then $v$ is eventually committed by every correct replica; (5) \textit{Total order}: Let $v_1$ and $v_2$ be any two values and suppose $p$ and $q$ are any two correct replicas that commit $v_1$ and $v_2$. If $p$ commits $v_1$ before $v_2$, then $q$ commits $v_1$ before $v_2$.
In the random asynchronous model, validity and agreement are satisfied with probability $1$ (almost surely).

\para{Fault model}
\sysname follows the standard CFT assumptions used in prior work~\cite{lamport2001paxos, ongaro2014search}. The system consists of $n = 2f + 1$ replicas, of which at most $f$ may crash. A replica is \emph{correct} if it follows the protocol and \emph{crashed} otherwise.

\para{Network model}
We analyze and evaluate \sysname under two network models: the standard partial synchrony model and the more recent random asynchronous model. Links between correct parties behave reliably, and messages among them eventually arrive.

In the partially synchronous model, we assume the standard Global Stabilization Time (GST) used throughout related work. After GST, every message sent among correct parties arrives within $\Delta$ time. Before GST, the protocol operates in a fully asynchronous network where messages eventually arrive but no bound limits their delay.

Although \sysname operates in the CFT setting, it is designed to operate in a more unstable environment prone to network instabilities. To model this, we employ the random asynchronous model~\cite{danezis2025byzantine} which assumes that message delivery follows a random schedule. The random asynchronous model strikes a midpoint between the partially synchronous and fully asynchronous settings: it does not rely on synchronous periods to commit, but relaxes the worst-case adversarial scheduling of the classic asynchronous model~\cite{fischer1985,ben1983another}, where an adversary can control the message schedule indefinitely.


%% file: sections/dissemination.tex
\section{The \sysname DAG: Data Dissemination}\label{sec:dissemination}

\sysname operates in two logical layers: (i) a data dissemination layer that propagates client transactions through a structured DAG (this section); and (ii) an ordering layer interpreting the DAG to derive a total order of transactions (see \Cref{sec:consensus}).

\subsection{Block creation}
Replicas operate in a sequence of logical \emph{rounds}, and in each round, every replica proposes a unique \emph{block}.
During a round, replicas receive transactions from clients and blocks from other replicas. Clients submit transactions to a replica, which includes them in its (next) block. If a transaction does not finalize quickly enough, the client simply resubmits it to a different replica. Each block references blocks from prior rounds, starting with the author's most recent block, and includes \emph{fresh transactions} not yet included in preceding blocks. 
Specifically, a block contains at least the following elements: (1) the author $A$ of the block; (2) a round number $R$; (3) a list of transactions; and (4) at least $f+1$ distinct references to blocks from the previous round $R-1$, as well as any additional references from earlier rounds, for which the replica has already downloaded the complete causal history.
At the core of \sysname is a round-based DAG structure where each vertex is a block. Each replica maintains a local view of the DAG, adding a block once it has downloaded its entire causal history. Consequently, all correct replicas eventually converge on the same DAG view.

\subsection{Inclusive block proposing}\label{sec:block-proposing}

Unlike traditional leader-based consensus protocols, where only leaders propose values, in \sysname all replicas, leaders or not, propose values in each round. In each round, a subset of replicas known to everyone is pre-designated as \textit{skeleton} nodes.
Skeleton nodes are akin to leaders (possibly multiple ones per view, as in multi-leader protocols). That is, blocks by skeleton nodes drive progress in the protocol: they can become directly committed, and they determine the commit decisions for blocks they causally reference, as described in the ordering layer (see \Cref{sec:consensus}).
A replica disseminates its block only after receiving (in addition to $f+1$ blocks from the previous round, also) all skeleton blocks from the previous round, or after a timeout expires.
This ensures liveness once the system becomes synchronous (see Supplementary Material).

\para{Random Asynchronous Model}
This model assumes that the first $f+1$ blocks received by a replica in a round $R$ constitute a random sample. We discuss the implementation of this mechanism in \Cref{sec:implementation} and prove that \sysname is live with probability 1 under this model (see \Cref{sec:liveness-proofs}).


%% file: sections/consensus.tex
\section{The \sysname Consensus Protocol} \label{sec:consensus}

By itself, the DAG layer described in \Cref{sec:dissemination} functions as a scalable data dissemination layer. It grows with the network and ensures that all disseminated transactions are reliably available. However, a consensus layer is required to order transactions.
\sysname integrates this layer logically into the DAG messages, rather than sending separate consensus messages, and infers decisions directly from the DAG itself.

\subsection{Identifying DAG patterns}

\sysname leverages patterns in the DAG structure to define its decision rules.

\para{Rounds and waves}
\Cref{fig:rounds} (left) shows an example of a \sysname DAG with three replicas, $(N_0, N_1, N_2)$. \sysname defines a \emph{wave} of two rounds for every block. The first round $R$ (\propose) includes the blocks that the wave attempts to commit ($P_0$, $P_1$, $P_2$). The second round $R+1$ (\decide) includes the blocks ($D_0$, $D_1$, $D_2$) implicitly acting as ``votes''~\cite{lamport2001paxos} determining the blocks of the previous round to commit.
\Cref{fig:rounds} (right) shows that \sysname initiates a new wave every round, with two consecutive waves always overlapping by one round.
Put differently, each round starts a new wave.
\Cref{alg:decider} (\Cref{sec:decision-rule}) formally defines a wave.

\begin{figure}[t]
    \centering
    \vskip -1em
    \includegraphics[width=\columnwidth]{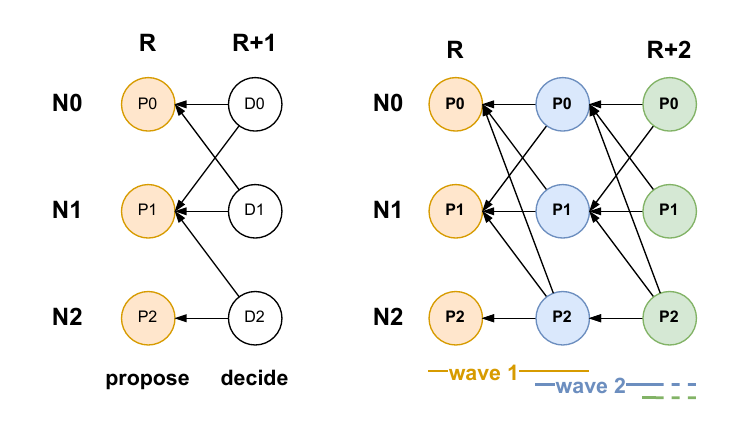}
    \vskip -0.5em
    \caption{
        The structure of the \sysname DAG. Left: The structure of a wave, consisting of two rounds (\propose and \decide). Right: Waves patterns in the \sysname protocol (each round starts a new overlapping wave).
     }
    \label{fig:rounds}
\end{figure}

\para{Skeleton slots}
A skeleton slot (or simply ``a slot'') is a tuple (replica, round) and can be either empty or contain the replica's proposal for the respective round. The slot 
can assume one of three states: \scommit, \sskip, or \sundecided. All slots are initially set to \sundecided and the goal of the protocol is to classify them as \scommit or \sskip.

The number of skeleton slots instantiated per round can be configured, and for systems with few faults, it can be as high as $n$ so that every block has a chance to commit in two steps. It can also be dynamically adjusted based on the network conditions, following a similar deterministic approach to HammerHead~\cite{hammerhead}. The protocol is initialized with a deterministic total order among skeleton slots, known to all replicas. Within a single round, this ordering may reflect a fixed (\eg round robin) or a variable regime per round. \Cref{fig:example} illustrates an example of a \sysname DAG with three replicas, $(N_0, N_1, N_2)$, two skeleton slots per round, and a potential skeleton slot ordering represented as ($S_{0a}$, $S_{0b}$) and ($S_{1a}$, $S_{1b}$) for the first and second rounds, respectively.

\subsection{The \sysname decision rule}\label{sec:decision-rule}

We present the decision rule of \sysname leveraging an example protocol run depicted in \Cref{fig:example}. In this example, we refer to blocks using the notation $B_{(N_i, R)}$, where $N_i$ is the issuing replica and $R$ is the block's round.

\begin{figure}[t]
    \centering
    \vskip -1em
    \includegraphics[width=\columnwidth]{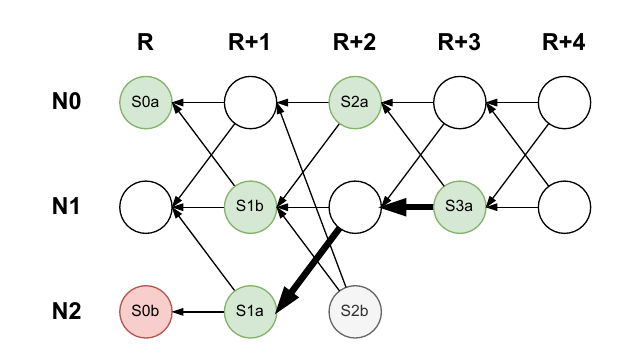}
    \vskip -0.5em
    \caption{
        Example execution with three replicas and two skeleton slots per round. Skeleton blocks are classified as \scommit (green), \sskip (red), or \sundecided (grey). Thick arrows signify commits via the indirect decision rule.
    }
    \label{fig:example}
\end{figure}

All skeleton slots are initially in the \sundecided state. The replica holds the portion of the DAG depicted in \Cref{fig:example} and attempts to classify as many blocks in the skeleton slots as possible as either \scommit or \sskip.

\para{Step 1: Direct decision rule}
To start the classification process, the replica processes each slot individually, starting with the highest ($S_{3a}$), applying the \sysname \emph{direct decision rule}. The replica classifies a block $B$ in a slot as \scommit if it observes $f+1$ blocks (\ie two blocks in our example) from the subsequent round referencing it. Otherwise, the replica leaves the slot as \sundecided (for now). This rule is formally described by the function \textsc{TryDirectDecide} in \Cref{alg:decider}.

In \Cref{fig:example}, the replica targets $S_{3a}$ first. It observes that $B_{(N_0, R+4)}$ and $B_{(N_1, R+4)}$ reference it. Therefore, it classifies $S_{3a}$ as \scommit. \Cref{sec:evaluation} shows that this scenario is the most common (in the absence of an asynchronous adversary) and results in the lowest latency.
Blocks $S_{2a}$, $S_{1b}$, and $S_{0a}$ are also classified as \scommit as they each have $f+1$ references from the subsequent round.

\para{Step 2: Indirect decision rule}
In the case where the direct decision rule cannot classify a skeleton slot  $B_{(N_i, R)}$, the replica uses the \sysname \emph{indirect decision rule}. This rule looks at future slots to decide about the current one. First, it finds an \emph{anchor}. This is the earliest non-skipped skeleton slot with a round number $R' > R + 1$ that is either still classified as \sundecided or already classified as \scommit. Recall that slots are totally-ordered, so the anchor is uniquely defined regardless of how the DAG evolves at different replicas. If the anchor is \sundecided, the replica marks the current slot as \sundecided. If the anchor is \scommit, the replica checks if it indirectly references the target slot, that is, it checks whether there is a path between the anchor and the target skeleton slot. If it does, the replica marks the target slot as \scommit. If it does not, the replica marks it as \sskip.
This rule is formally described by the function \textsc{TryIndirectDecide} in \Cref{alg:decider}.
\Cref{sec:proofs} shows that the direct and indirect decision rules are consistent: if one replica directly commits a block, no honest replicas will indirectly skip it (and vice versa).

In this example, the replica fails to classify $S_{2b}$, $S_{1a}$, and $S_{0b}$ using the direct decision rule and thus searches for their respective anchors. The status of the anchor of $S_{2b}$ is still \sundecided; the replica thus marks $S_{2b}$ as \sundecided. Eventually, the DAG will grow and a block with round $R' > R+3$ will become the anchor for $S_{2b}$. For the moment however, $S_{2b}$ remains \sundecided. $S_{3a}$ is the anchor for $S_{1a}$; since $S_{3a}$ has a path to $S_{1a}$ (marked with thick arrows), the replica classifies $S_{1a}$ as \scommit. The anchor of $S_{0b}$ is $S_{2a}$ which does not have a path to $S_{0b}$. Thus, the replica classifies $S_{0b}$ as \sskip.

\para{Step 3: Skeleton slots sequence}
After processing all slots, the replica derives an ordered sequence of the blocks contained in the skeleton slots. It then iterates over this sequence, committing all slots marked as \scommit and skipping all slots marked as \sskip. This process continues until the replica encounters the first \sundecided slot. As shown in \Cref{sec:proofs} this commit sequence is safe, and eventually, all slots will be classified as either \scommit or \sskip.
In the example shown in \Cref{fig:example}, the skeleton block output by the replica is [$S_{0a}$, $S_{1a}$, $S_{1b}$, $S_{2a}$]. Since $S_{2b}$ remains \sundecided, it is not included in the sequence and $S_{3a}$ is also excluded.

\para{Step 4: Commit sequence}
Following the approach introduced by DagRider~\cite{dag-rider}, the replica determines a linear ordering of all the blocks within the sub-DAG defined by each skeleton block, including previously skipped skeleton blocks (if they are reachable), by performing a depth-first search. If a previous skeleton slot has already linearized a block, it is not re-linearized. The replica processes skeleton slots sequentially, ensuring that all blocks are included in the final commit sequence in the correct order, according to their causal dependencies. The procedure $\textsc{LinearizeSubDags}$ of \Cref{alg:helper} formally describes this linearization process.

In this example, $S_{0a}$ and $S_{0b}$ do not define any sub-DAG (because the example begins at round $R$) and are thus directly added to the commit sequence. Next, $S_{1a}$ defines the sub-DAG $\{ L_{1a}, B_{(N_1, R)}, L_{0b} \}$, which is linearized as [$B_{(N_1, R)}$, $S_{0b}$, $S_{1a}$]. The replica continues this process for each skeleton, linearizing the sub-DAGs defined by $S_{1b}$ as [$S_{1b}$], since both $S_{0a}$ and $B_{(N_1, R)}$ are already part of the commit sequence, and so forth. The final commit sequence is
    [
        $S_{0a}$,
        $B_{(N_1,R)}$, $S_{1a}$,
        $S_{1b}$,
        $B_{(N_0,R+1)}$, $S_{2a}$
    ].

\input{algorithms/core}
\input{algorithms/decider}
\input{algorithms/helper}

%% file: algorithms/core.tex
\begin{algorithm}[t]
    \algsize
    \caption{\sysname}
    \label{alg:main}
    \begin{algorithmic}[1]

        \State \texttt{leadersPerRound} \Comment{A number between 1 and $2f+1$}
        \State \texttt{waveLength} $= 2$
        \Statex

        \Procedure{TryDecide}{$r_{committed}, r_{highest}$}
        \State $S \gets [ \; ]$ \Comment{Holds decisions}
        \For{$r \gets r_{highest}$ \textbf{down to} $r_{committed} + 1$}
        \For{$l \gets \texttt{leadersPerRound} - 1$ \textbf{down to} $0$}
        \State $i \gets r \; \bmod$ \texttt{waveLength}
        \State $D \gets$ \Call{Decider}{$i, l$}
        \State $w \gets D.$\Call{WaveNumber}{$r$}
        \State $s \gets D.$\Call{TryDirectDecide}{$w$}
        \If{$s = \bot$} $s \gets D.$\Call{TryIndirectDecide}{$w, S$}
        \EndIf
        \State $S \gets s \parallel S$
        \EndFor
        \EndFor
        \State \Return $S$
        \EndProcedure
        \Statex

        \Procedure{ExtendCommitSequence}{$r_{committed}, r_{highest}$}
        \State $S \gets$ \Call{TryDecide}{$r_{committed}, r_{highest}$}
        \State $S_{commit} \gets [ \; ]$ \Comment{Holds committed blocks}
        \For{$s \in S$}
        \If{$s = \bot$} \textbf{break}
        \EndIf
        \If{$s = \texttt{Commit}(b_{leader})$} $S_{commit} \gets S_{commit} \parallel b_{leader}$
        \EndIf
        \EndFor
        \State \Return \Call{LinearizeSubDags}{$S_{commit}$} \Comment{Commit returned blocks}
        \EndProcedure

    \end{algorithmic}
\end{algorithm}

%% file: algorithms/decider.tex
\begin{algorithm}[t]
    \algsize
    \caption{Decider Instance}
    \label{alg:decider}
    \begin{algorithmic}[1]

        \State \texttt{waveOffset} $= i$ \Comment{The first parameter of the Decider (i)}
        \State \texttt{leaderOffset} $= l$ \Comment{The second parameter of the Decider (l)}
        \State \texttt{waveLength} $= 2$
        \Statex

        \Procedure{WaveNumber}{$r$}
        \State \Return $(r - \texttt{waveOffset}) / \texttt{waveLength}$
        \EndProcedure
        \Statex

        \Procedure{ProposeRound}{$w$}
        \State \Return $(w \cdot \texttt{waveLength}) + \texttt{waveOffset}$
        \EndProcedure
        \Statex

        \Procedure{DecisionRound}{$w$}
        \State \Return \Call{ProposeRound}{$w$}$ + (\texttt{waveLength} - 1)$
        \EndProcedure
        \Statex

        \Procedure{SupportedLeader}{$w, b_{leader}$}
        \State $B_{decision} \gets$ \Call{GetDecisionBlocks}{$w$}
        \State \Return $|\{ b' \in B_{decision} : $ \Call{IsLink}{$b', b_{leader}$}$ \}| \geq f + 1$
        \EndProcedure
        \Statex

        \Procedure{TryDirectDecide}{$w$}
        \State $b_{leader} \gets$ \Call{GetLeaderBlock}{$w$, \texttt{leaderOffset}}
        \If{\Call{SupportedLeader}{$w, b_{leader}$}} \Return \texttt{Commit}$(b_{leader})$
        \Else \; \Return $\bot$
        \EndIf
        \EndProcedure
        \Statex

        \Procedure{TryIndirectDecide}{$w, S$}
        \State $r_{decision} \gets $\Call{DecisionRound}{$w$}
        \State $s_{anchor} \gets$ first $s \in S$ s.t. $r_{decision} < s.round \land s \neq$ \texttt{Skip}
        \If{$s_{anchor} = \texttt{Commit}(b_{anchor})$}
        \State $b_{leader} \gets$ \Call{GetLeaderBlock}{$w$, \texttt{leaderOffset}}
        \If{$\Call{Link}{b_{anchor}, b_{leader}}$} \Return $\texttt{Commit}(b_{leader})$
            \Else \; \Return \texttt{Skip}
            \EndIf
            \EndIf
            \State \Return $\bot$
        \EndProcedure

    \end{algorithmic}
\end{algorithm}

%% file: algorithms/helper.tex
\begin{algorithm}[t]
    \algsize
    \caption{Helper Functions}
    \label{alg:helper}
    \begin{algorithmic}[1]

        \State \texttt{nodes} \Comment{The set of nodes}
        \Statex

        \Procedure{GetDecisionBlocks}{$w$}
        \State $r_{decision} \gets $\Call{DecisionRound}{$w$}
        \State \Return $DAG[r_{decision}]$
        \EndProcedure
        \Statex

        \Procedure{GetLeaderBlock}{$w, rank$}
        \State $r_{propose} \gets $\Call{ProposeRound}{$w$}
        \State $leader \gets \texttt{nodes}[(r_{propose} + rank) \bmod |\texttt{nodes}|]$
        \If{$\exists b \in DAG[r_{propose}] \text{ s.t. } b.author = leader$} \Return $b$
        \Else \; \Return $\bot$
        \EndIf
        \EndProcedure
        \Statex

        \Procedure{IsLink}{$b_{new}, b_{old}$}
        \State \Return $\exists$ a sequence of $k \in \mathbb{N}$ blocks $b_1, \ldots, b_k$ s.t.
        \Statex \hspace{2em} $b_1 = b_{old} \land b_k = b_{new} \land \forall j \in [2, k] : b_j \in \bigcup_{r \geq 1} DAG[r] \land b_{j-1} \in b_j.parents$
        \EndProcedure
        \Statex

        \Procedure{LinearizeSubDags}{$L$} \label{alg:helper:linearize}
        \State $O \gets [\;]$ \Comment{Hold output sequence}
        \For{$b_{leader} \in L$}
        \State $B \gets \{b \in \bigcup_{r \geq 1} \store[r, *] \text{ s.t. } \Call{IsLink}{b, b_{leader}} \wedge b \notin O \wedge b \text{ not already output} \}$
        \For{$b \in B \text{ in any deterministic order}$}
        \State $O \gets O \; || \; b$
        \EndFor
        \EndFor
        \State \Return $O$
        \EndProcedure
    \end{algorithmic}
\end{algorithm}

%% file: sections/proofs.tex
\section{Correctness}\label{sec:proofs}

In this section, we prove that \sysname guarantees the properties of atomic broadcast.

\subsection{Safety proofs}\label{sec:safety}

\begin{lemma} \label{lem:path-exists}
    If in round $R$, $f+1$ blocks from distinct replicas vote for a block $B$, then all blocks at future rounds $R'>R$ will have a path to $B$.
\end{lemma}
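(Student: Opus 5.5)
The plan is an induction on the round $R' > R$, using a quorum-intersection argument at each step. Let $V$ be the set of $f+1$ distinct replicas whose round-$R$ blocks vote for $B$, meaning each such block has $B$ as a parent, or more generally has a path to $B$. Since $n = 2f+1$, any two sets of $f+1$ replicas share at least one replica, because $(f+1)+(f+1) > 2f+1$. I will use this fact repeatedly.

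\emph{Base case $R' = R+1$.} By the block-creation rule of \Cref{sec:dissemination}, every block $B'$ at round $R+1$ references at least $f+1$ blocks from distinct replicas at round $R$. Each replica issues a unique block per round, because CFT replicas do not equivocate. So the $f+1$ authors referenced by $B'$ intersect $V$ in some replica $p$. The round-$R$ block of $p$ that $B'$ references is therefore the very block with which $p$ voted for $B$. Hence $B'$ has an edge to a voter, and the voter has a path to $B$, which gives a path from $B'$ to $B$.

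\emph{Inductive step.} Suppose every block at round $R'-1 \geq R+1$ has a path to $B$. Any block at round $R'$ has at least one parent at round $R'-1$; it has $f+1$ of them, but one suffices. Appending that parent's path to $B$ gives a path from the round-$R'$ block. The induction therefore carries through for all $R' > R$.

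The main obstacle is the base case, and specifically the non-equivocation point. The argument requires that "the block of replica $p$ at round $R$" referenced by $B'$ is the same block that voted. In the CFT model this holds because each correct replica produces exactly one block per round, and a crashed replica produced at most one before crashing. I would state this explicitly, citing the uniqueness in the block-creation description.

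I would also note that "blocks at future rounds" should be read as blocks that are valid DAG vertices. Such blocks satisfy the $f+1$-parent requirement, and a replica adds a block only after obtaining its full causal history, so the path exists in the local DAG of any replica that holds the block.
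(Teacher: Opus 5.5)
Your proof is correct and follows the paper's induction on $R'$, with quorum intersection doing the work in the base case. Your inductive step is slightly leaner: you use the full hypothesis that every round-$(R'-1)$ block has a path to $B$, so any single parent suffices, whereas the paper re-invokes quorum intersection against $f+1$ such blocks. Your explicit use of the one-block-per-replica-per-round rule also makes precise a step the paper leaves implicit in its appeal to quorum intersection.
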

\begin{proof}
    We prove the lemma by induction on $R'$. The base case is $R' = R+1$. Let $B'$ be a block at round $R'$. Since $B'$ points to $f+1$ blocks at round $R$, by quorum intersection, $B'$ must point to at least one of the blocks that vote for $B$, and thus have a path to $B$.

    For the induction case, assume the lemma holds up to round $R'$ and consider the case of round $R'+ 1$. Let $B'$ be a block at round $R'+1$. By the induction hypothesis, $f+1$ blocks at round $R'$ have paths to $B$. Since $B'$ points to $f+1$ blocks from round $R'$, by quorum intersection, $B'$ must point to at least one block that has a path to $B$.
\end{proof}

\begin{lemma}\label{lem:commit-no-skip}
    If a replica directly commits some block in a slot $S$, then no other replica skips slot $S$.
\end{lemma}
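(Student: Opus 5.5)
The plan is to argue by contradiction. Suppose some replica $p$ directly commits a block $B$ in slot $S=(N_i,R)$, and another replica $q$ skips $S$. We first note how $q$ could skip. The direct decision rule never outputs \sskip. So $q$ must have skipped $S$ through the indirect decision rule, which means $q$ found an anchor slot $S'$ with round $R'>R+1$ that $q$ classified as \scommit, holding a block $B'$ with no path to the block in $S$.

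We also use the absence of equivocation in the CFT setting. Each replica proposes a unique block per round, so the block $q$ sees in slot $S$ (if any) is the same $B$ that $p$ committed. If $q$ sees no block in slot $S$ at all, we handle that case separately. The anchor $B'$ is in $q$'s DAG, and $q$ adds a block only after downloading its full causal history. So if $B'$ has a path to $B$, then $B$ is in $q$'s DAG and the \textsc{Link} check succeeds. It therefore suffices to show that $B'$ has a path to $B$.

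Since $p$ directly committed $B$, $p$ observed $f+1$ blocks from distinct replicas in round $R+1$ referencing $B$. These are concrete blocks in the global DAG, not artifacts of $p$'s local view. By \Cref{lem:path-exists}, applied with voting round $R+1$, every block at any round $R''>R+1$ has a path to $B$. The anchor $B'$ lies at round $R'>R+1$, so $B'$ has a path to $B$. Hence $q$'s \textsc{Link}$(b_{anchor},b_{leader})$ returns true, and $q$ returns \texttt{Commit}$(B)$ rather than \texttt{Skip}, which is a contradiction.

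The main obstacle is making sure the pieces line up. First, \Cref{lem:path-exists} quantifies over all blocks in the global execution, independent of any replica's local view, so it applies to $B'$ even though $q$ may not see the votes $p$ saw. Second, the anchor's round really is at least $R+2$: \textsc{TryIndirectDecide} uses the strict inequality $r_{decision}<s.round$ with $r_{decision}=R+1$. Third, the case where $q$ evaluates $S$ with \textsc{GetLeaderBlock} returning $\bot$ must be ruled out. Causal-history closure gives this: $B'$ in $q$'s DAG with a path to $B$ forces $B$ into $q$'s DAG before $q$ can hold $B'$. No induction on anchors is needed here, since the argument is purely structural and uses only the single anchor $q$ relied on.
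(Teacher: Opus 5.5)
Your proof is correct and follows the same route as the paper: a skip can only come from the indirect rule, and \Cref{lem:path-exists}, applied to the $f+1$ votes in round $R+1$, forces the committed anchor at round at least $R+2$ to have a path to $B$, so the \textsc{Link} check cannot fail. Your extra care only tightens details the paper leaves implicit: you apply the lemma with voting round $R+1$ rather than the paper's loose ``all blocks at rounds $R'>R$'', and you use causal-history closure to rule out \textsc{GetLeaderBlock} returning $\bot$.
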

\begin{proof}
    Assume by contradiction that a replica $N$ directly commits block $B$ in slot $S$ while another replica $N'$ decides to skip $S$. Let $R$ be the round of $S$. Since $N$ directly commits $B$, there exist $f+1$ votes for $B$ at $S$. Therefore, by \Cref{lem:path-exists}, all blocks at rounds $R' > R$, including the anchor of $S$, have a path to $B$ at $S$. Thus, $N'$ cannot decide to skip $S$ using the indirect decision rule. We have reached a contradiction.
\end{proof}

\begin{observation}\label{obs:agree-commit}
    If a slot $S$ is committed at two replicas, then $S$ contains the same block at both replicas.
\end{observation}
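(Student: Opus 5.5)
The plan is to reduce the observation to the crash-fault model's lack of equivocation together with how a slot's content is determined. A slot is a pair (replica, round). By \textsc{GetLeaderBlock}, the block a replica associates with slot $S=(N_i,R)$ is the unique block in its local $DAG[R]$ authored by $N_i$. Hence two replicas can disagree on the contents of a committed slot only if $N_i$ produced two distinct blocks for round $R$. The argument therefore has three steps.

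\emph{Step 1 (no equivocation).} I would argue that a replica creates at most one block per round. This follows from the block-creation rule in \Cref{sec:dissemination}: every replica proposes a unique block per round. In the CFT model, a replica either follows the protocol or crashes, so even a replica that later crashes never sends two different blocks for the same round.

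\emph{Step 2 (committed slot is nonempty and well-defined locally).} Committing a slot, directly or indirectly, requires $b_{leader} \neq \bot$. Direct commit needs $f+1$ links to $b_{leader}$. Indirect commit needs a path from the anchor to $b_{leader}$, and blocks enter the DAG only after their full causal history does. So at each committing replica, $S$ contains some block authored by $N_i$ at round $R$.

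\emph{Step 3 (conclude).} Let replica $p$ commit $S$ with block $B$ and replica $q$ commit $S$ with block $B'$. Both $B$ and $B'$ are blocks of round $R$ authored by $N_i$, and blocks are transmitted unmodified over reliable links among correct parties, with no tampering in the crash model. Step~1 then gives $B=B'$.

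The only delicate point is Step~1, namely justifying that a replica that crashes and possibly restarts cannot reissue a different round-$R$ block. I would handle this by noting that the fault model treats crashed replicas as permanently halted. If recovery were allowed, I would appeal to the persistence of a replica's own last-proposed round, but under the stated model that is not needed. Step~1 would be stated as a remark following directly from the definitions.
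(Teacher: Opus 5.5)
Your proof is correct and follows the paper's argument. The paper derives the observation from two facts. The first is that the skeleton-slot sequence is predetermined and known to all replicas; you use this implicitly when you identify $S$ with the same pair $(N_i,R)$ at both replicas via the deterministic leader computation in \textsc{GetLeaderBlock}. The second is that replicas propose at most one block per slot, which is your Step~1. Your Step~2 (a committed slot is nonempty) and your crash-recovery remark add detail the paper leaves implicit; they do not change the approach.
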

\begin{proof}
    This follows, by construction, from two facts: (1) the sequence of skeleton slots is predetermined and known to all replicas, (2) replicas propose at most one block per slot.
\end{proof}

We say that a slot $S$ is \textit{decided} at a replica $N$ if $S$ is committed or skipped. Otherwise, $S$ is \textit{undecided}.
\begin{lemma}\label{lem:consistent}
    If a slot $S$ is decided at two replicas $N$ and $N'$, then either both replicas commit $S$, or both replicas skip $S$.
\end{lemma}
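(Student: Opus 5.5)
We argue by contradiction and use strong induction on the slot order, proceeding from the latest slot backwards. The decision for a slot depends only on later slots, through the anchor, so this direction is natural. Suppose $N$ commits $S$ and $N'$ skips $S$. A skip can only arise from the indirect rule, so $N'$ skipped $S$ indirectly. The direct rule never produces a skip.

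\emph{Case 1: $N$ commits $S$ directly.} Then \Cref{lem:commit-no-skip} immediately contradicts $N'$ skipping $S$.

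\emph{Case 2: $N$ commits $S$ indirectly.} Let $A$ be the anchor used by $N$ and $A'$ the anchor used by $N'$. Both are committed at their respective replicas and lie in rounds greater than $R+1$. The key step is to show $A = A'$, using the induction hypothesis on slots later than $S$.

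The anchor is defined as the first slot after round $R+1$ in the fixed total order that is not skipped. Let $X$ be the earlier of $A$ and $A'$ in the slot order, say $X=A$. Every slot strictly between round $R+1$ and $A'$ is skipped at $N'$. If $A$ precedes $A'$, then $A$ is skipped at $N'$ but committed at $N$. This contradicts the induction hypothesis applied to $A$, which is a later slot than $S$ and decided at both replicas. The symmetric argument rules out $A'$ preceding $A$. Hence $A=A'$.

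By \Cref{obs:agree-commit}, this anchor slot contains the same block $B_A$ at both replicas. Since blocks carry their full causal history and are added to a local DAG only once it is fully downloaded, the set of paths from $B_A$ is identical at $N$ and $N'$. Likewise, by \Cref{obs:agree-commit}-style uniqueness, the target block in $S$ is the same block whenever it exists. So the \textsc{Link} test returns the same answer at both replicas, and they cannot disagree on commit versus skip.

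The induction is well-founded. Only finitely many slots are decided at any point, and decisions for $S$ depend only on slots at higher rounds, so we can induct on decreasing position among decided slots. Alternatively, we can pick a latest counterexample slot.

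The main obstacle is the anchor-equality step. It requires care that intermediate slots skipped at one replica are actually decided at the other. If such a slot were undecided at $N$, then $N$'s anchor search would have returned an undecided anchor and $N$ would not have decided $S$. So each intermediate slot is decided at $N$, and the induction hypothesis applies to it.
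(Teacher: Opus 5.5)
Your proof is correct and follows essentially the same route as the paper. Both take the latest slot at which the replicas disagree, rule out the direct rule via \Cref{lem:commit-no-skip}, use the maximality assumption to show the two anchors coincide, and then invoke \Cref{obs:agree-commit} to conclude that the \textsc{Link} test gives the same answer at both replicas. You also spell out details the paper leaves implicit: why the two anchors must be equal (including that the intervening slots are decided at both replicas), and why the anchor block's causal history, and hence its path set, is identical at both replicas.
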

\begin{proof}
    Assume by contradiction that there exists a slot $S$ such that $N$ and $N'$ decide differently at $S$. We consider a finite execution prefix and assume \textit{wlog} that $S$ is the highest slot at which $N$ and $N'$ decide differently ($\star$). Further assume \textit{wlog} that $N$ commits $S$ and $N'$ skips $S$.
    By \Cref{lem:commit-no-skip}, neither $N$ nor $N'$ could have used the direct decision rule for $S$; they must both have used the indirect rule.
    Consider now the anchor of $S$: $N$ and $N'$ must agree on which slot is the anchor of $S$, since by our assumption ($\star$) above, they make the same decisions for all slots higher than $S$, including the anchor of $S$. Let $S'$ be the anchor of $S$; $S'$ must be committed at both $N$ and $N'$. Thus, by \Cref{obs:agree-commit}, $N$ and $N'$ commit the same block $B'$ at $S'$. But then $N$ and $N'$ cannot reach different decisions about slot $S$ using the indirect decision rule, a contradiction.
\end{proof}

We have proven the consistency of replicas' commit sequences: replicas commit (or skip) the same skeleton blocks, in the same order. However, we are not done: we also need to prove that non-skeleton blocks are committed in the same order by replicas. We show this next.

\para{Causal history and commit conditions}
Consider a replica $N$. We call the \textit{causal history} of a block $B$ in $N$'s DAG, the transitive closure of all blocks referenced by $B$ in $N$'s DAG, including $B$ itself. In \sysname, a block $B$ is committed by a replica $N$ if (1) there exists a committed skeleton block $L$ in $N$'s DAG such that $B$ is in $L$'s causal history (2) all skeleton slots up to $L$ are decided in $N$'s DAG and (3) $B$ has not been committed as part of a lower slot's causal history. In this case, we say $B$ is \textit{committed at} slot $S$, or \textit{committed with} block $L$.

\begin{lemma}\label{lem:committed-same-slot}
    If a block $B$ is committed by two replicas $N$ and $N'$, then $B$ is committed at the same slot $S$, and $B$ is committed with the same skeleton block $L$, at both $N$ and $N'$.
\end{lemma}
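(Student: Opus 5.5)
The plan is to reduce the statement to three facts. The first is that the sequence of decided slots, and the block in each committed slot, agree across replicas; this comes from \Cref{lem:consistent} and \Cref{obs:agree-commit}. The second is that the causal history of a block is the same in every replica's DAG. The third is that the ``first committed skeleton block whose causal history contains $B$'' is determined by these common data.

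\textbf{Step 1 (causal histories coincide).} A replica adds a block to its DAG only after it has downloaded that block's entire causal history. Blocks are uniquely identified by author and round, and replicas never equivocate in the CFT model. Hence, if a block $L$ is present in both $N$'s and $N'$'s DAGs, then $L$ references the same parents in both. By induction on rounds, the causal history of $L$ is therefore identical at $N$ and $N'$.

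\textbf{Step 2 (setup).} Let $B$ be committed at $N$ at slot $S$ with block $L$, and at $N'$ at slot $S_2$ with block $L_2$. Suppose wlog that $S \le S_2$ in the predetermined total order on slots. By condition (2) of the commit definition, $N'$ has decided every slot up to $S_2$, so in particular $N'$ has decided $S$. Since $N$ commits $S$, \Cref{lem:consistent} shows that $N'$ also commits $S$. By \Cref{obs:agree-commit}, the block $N'$ commits at $S$ is the same $L$.

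\textbf{Step 3 (minimality forces $S_2 = S$).} By Step 1, $B$ lies in the causal history of $L$ in $N'$'s DAG as well. Suppose for contradiction that $S < S_2$. Linearization processes committed slots in order, and $S$ is a committed slot of $N'$ below $S_2$ whose causal history contains $B$. So $B$ would already have been output with $L$ or with an even earlier slot, contradicting condition (3) for $N'$ at $S_2$. Therefore $S_2 = S$, and by \Cref{obs:agree-commit} we get $L_2 = L$.

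For a cleaner argument, I would characterize the commit slot of $B$ at any replica directly: it is the lowest committed slot whose block has $B$ in its causal history. By Steps 1 and 2, this lowest slot is the same at both replicas. To justify ``lowest'', I would also check the slots below $S$. Any slot $T < S$ is decided at both replicas, because each replica has decided all slots up to its commit slot. By \Cref{lem:consistent}, each such $T$ is committed at both or skipped at both, and when committed it holds the same block. So $B$ appears in the causal history of some lower slot at $N$ exactly when it does at $N'$.

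The main obstacle I expect is Step 1, together with the subtle point about skipped blocks. The paper says previously skipped skeleton blocks can still be linearized when they are reachable. The argument must therefore rely only on committed anchors and their causal histories, which it does. Beyond that, Step 1 needs non-equivocation and the download-before-insert rule to guarantee identical causal histories, which I would state explicitly as a short auxiliary claim.
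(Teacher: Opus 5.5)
Your proof is correct and follows essentially the same route as the paper's: assume the two commit slots differ, show the lower slot is committed with the same skeleton block at both replicas, and contradict condition (3) at the higher slot. You are more explicit than the paper, which silently relies on two facts you state and justify: condition (2) together with \Cref{lem:consistent}, to know the other replica commits the lower slot, and the fact that causal histories coincide across replicas.
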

\begin{proof}
    Let $S$ be the slot at which $B$ is committed at replica $N$, and $L$ the corresponding skeleton block in $S$, also at replica $N$. Consider now the slot $S'$ at which $B$ is committed at replica $N'$, and $L'$ the corresponding skeleton block. Assume by contradiction that $S' \ne S$. If $S' < S$, then $N$ would have also committed $B$ at slot $S'$, since by \Cref{obs:agree-commit}, they must commit the same skeleton blocks in the same slots, so $N$ could not have committed $B$ again at slot $S$; a contradiction. Similarly, if $S < S'$, then $N'$ would have already committed $B$ at slot $S$, since by \Cref{obs:agree-commit} $N$ and $N'$ must have committed the same block in slot $S$; contradiction. Thus, it must be that $S = S'$, and by \Cref{obs:agree-commit}, $L = L'$.
\end{proof}

We can now prove the main safety properties of \sysname: total order, no duplication, and no creation.
\begin{thm}[Total Order]\label{thm:total-order}
    \sysname satisfies the total order property of Atomic Broadcast.
\end{thm}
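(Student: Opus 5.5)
The plan is to reduce total order to the agreement on skeleton decisions already established, and then argue that the linearization step is a deterministic function of the agreed-upon data. Fix two correct replicas $p$ and $q$ and two values $v_1, v_2$ committed by both, with $p$ committing $v_1$ before $v_2$. Let $B_1, B_2$ be the blocks carrying $v_1$ and $v_2$. (If they lie in the same block, the within-block transaction order is fixed by the block itself; blocks are unique per author and round, so this case is immediate.) By \Cref{lem:committed-same-slot}, $B_1$ is committed at the same slot $S_1$ with the same skeleton block $L_1$ at both $p$ and $q$, and likewise $B_2$ at $S_2$ with $L_2$.

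The argument then splits into two cases. If $S_1 \neq S_2$, then since $p$ commits $v_1$ first and slots are processed in the predetermined total order (Step~3 of \textsc{ExtendCommitSequence}), we must have $S_1 < S_2$. Replica $q$ processes slots in the same order, so it too outputs $B_1$ before $B_2$. If $S_1 = S_2$, both blocks belong to the sub-DAG linearized for $L_1 = L_2$. The set being linearized is the causal history of $L_1$ minus the blocks already output at lower slots.

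The main step is to show that this set, together with its order, is identical at $p$ and $q$. For the set itself, I would argue that causal history is intrinsic to the block: a replica adds a block to its DAG only after it has downloaded the block's full causal history, and references identify blocks uniquely. Hence the causal history of $L_1$ is the same at both replicas. For the blocks already output, \Cref{lem:consistent} and \Cref{obs:agree-commit} show that $p$ and $q$ commit the same skeleton blocks at every lower slot. Because every slot up to $L_1$ is decided at both replicas (commit condition (2)), induction over slots shows that the set of previously output blocks is the same at both. The deterministic ordering used in \textsc{LinearizeSubDags} then places $B_1$ before $B_2$ at $q$, exactly as at $p$.

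I expect the main obstacle to be this inductive claim that prior outputs coincide. It requires that \emph{every} lower slot be decided identically at both replicas, not just those containing the blocks of interest, and it is exactly here that commit condition (2) and \Cref{lem:consistent} must be combined carefully.
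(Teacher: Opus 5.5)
Your proposal is correct and follows the paper's route. The paper's proof simply invokes \Cref{lem:committed-same-slot} together with the fact that every replica linearizes the causal histories of committed skeleton blocks with the same deterministic function. Your case split on $S_1 \neq S_2$ versus $S_1 = S_2$, and your check that both the causal history of $L_1$ and the set of blocks output at lower slots coincide at $p$ and $q$ (via \Cref{lem:consistent} and \Cref{obs:agree-commit}), spell out what the paper leaves implicit in the word ``deterministic''.
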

\begin{proof}
    This property follows immediately from \Cref{lem:committed-same-slot} and the fact that replicas order the causal histories of committed blocks using the same deterministic function, and commit blocks in this order.
\end{proof}

\begin{thm}[No duplication]\label{thm:no-dup}
    \sysname satisfies the no duplication property of Atomic Broadcast.
\end{thm}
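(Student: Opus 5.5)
The plan is to argue directly from the definition of when a block is committed (the \emph{causal history and commit conditions} paragraph) together with \Cref{lem:committed-same-slot}. No duplication requires that no value is committed more than once, so I would split the argument into two levels: \emph{within} a single replica, and \emph{across} replicas, where the ordering must be consistent.

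First, within a single replica $N$, no block $B$ is output twice. The key tool is condition (3) of the commit definition, enforced by \textsc{LinearizeSubDags} (\Cref{alg:helper}). That procedure only adds $b$ to $O$ if $b \notin O$ and $b$ has not already been output. So once $B$ is committed with some skeleton block $L$ at slot $S$, it is excluded from the sub-DAG of every later committed skeleton block.

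I would also note that \textsc{ExtendCommitSequence} is invoked repeatedly with an increasing $r_{committed}$. The ``not already output'' check must therefore persist across invocations, not just within one call. I would state this explicitly as the intended reading of the procedure.

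Second, skeleton slots themselves are never committed twice. This follows because the slot sequence is predetermined and totally ordered, and each slot is processed once in Step 3. By \Cref{lem:consistent}, a slot decided at a replica keeps the same decision, so a later invocation never re-commits it.

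Third, there is the issue of a value rather than a block. A client may resubmit the same transaction to a different replica, so the same value could appear in two distinct blocks. I would handle this in one of two ways:
\begin{itemize}
\item treat the committed values as blocks, or pairs (block, position), each uniquely identified by author and round, since each replica proposes one block per round; or
\item add a deterministic deduplication step during linearization: skip any transaction already present in the output prefix.
\end{itemize}
Because, by \Cref{thm:total-order} and \Cref{lem:committed-same-slot}, all replicas linearize identical prefixes, this filter is applied identically everywhere and preserves total order.

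The main obstacle is this third point. The paper's block-level rules directly give non-duplication of blocks, but not of transactions. I would first try the block-identity argument, falling back to the deterministic filter if needed. The within-replica argument via condition (3) is otherwise routine.
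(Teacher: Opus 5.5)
Your core step is condition (3) of the commit definition, realized by the ``not already output'' check in \textsc{LinearizeSubDags} (persisting across invocations). This is exactly the paper's proof, which argues by construction that a block is committed with a skeleton block only if it was not already committed with an earlier one. Your other points go beyond the paper, which simply treats each block as the committed value: the skeleton-slot step is redundant because the same output filter already covers skeleton blocks, and the transaction-level deduplication is unnecessary under the block-as-value reading you list first.
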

\begin{proof}
    This is by construction: a block $B$ is committed as part of the causal history of a committed skeleton block only if $B$ has not been committed along with an earlier leader block (see "Causal history \& commit conditions" above). So a replica cannot commit the same block twice.
\end{proof}

\begin{thm}[No creation]\label{thm:no-create}
    \sysname satisfies the no creation property of Atomic Broadcast.
\end{thm}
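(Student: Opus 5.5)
The argument is by construction. I would trace what can appear in a replica's commit sequence back to the DAG and then to the block's author.

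First, I would fix what ``committed'' means here. A value $v$ is committed only as a transaction inside some block $B$, and $B$ itself is committed. By the commit conditions in the causal-history paragraph above, this requires a committed skeleton block $L$ such that $B$ lies in $L$'s causal history in the replica's local DAG. The procedure \textsc{LinearizeSubDags} only ever outputs blocks taken from $\bigcup_{r} \store[r]$. So every committed value belongs to a block $B$ that is present in the local DAG of the committing replica.

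Second, I would argue that every block in a replica's local DAG was actually created by its stated author $s$. In the crash-fault model, replicas either follow the protocol or halt, so no replica fabricates a block bearing another replica's identity. Links between correct parties are reliable, so they do not forge or alter messages. A replica adds a block $B$ with $B.author = s$ to its DAG only after receiving it, together with its full causal history, through such messages. Hence $B$ was originally disseminated by $s$ as its unique block for that round.

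Third, I would use the block-creation rule of \Cref{sec:dissemination}. A replica places into its block only transactions it received from clients, which is exactly what it means for $s$ to propose them. Therefore $v \in B$ implies that $v$ was previously proposed by $s$, the proposer recorded for $v$.

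The only delicate point is the second step. It relies on the model assumptions (crash-only faults and reliable, non-forging links) rather than on cryptographic authentication, since \sysname removes signatures. I would state this dependence explicitly; otherwise the proof is immediate.
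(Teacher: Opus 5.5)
Your proposal is correct and matches the paper's argument, which rests on the same two facts as your first two steps: replicas only commit blocks already in their local DAGs, and they only include in those DAGs blocks that some replica previously proposed. Your third step (transactions inside blocks) and your explicit appeal to crash-only faults and non-forging links just spell out details the paper leaves implicit, since it treats blocks themselves as the proposed values.
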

\begin{proof}
    This follows from two facts: (1) replicas only include in their local DAGs, blocks that have been previously proposed by some replica, and (2) replicas only commit blocks that they have previously included in their DAGs.
\end{proof}

\subsection{Liveness proofs}\label{sec:liveness-proofs}
We now turn to liveness. We prove liveness in the random asynchronous model here, and defer the proof of liveness under partial synchrony to the Supplementary Material.


\begin{lemma}\label{lem:eventually-all-references}
    If a block $B$ produced by a correct replica $N$ references some block $B'$, then $B'$ will eventually be included in the local DAG of every correct replica.
\end{lemma}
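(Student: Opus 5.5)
The argument rests on two facts from \Cref{sec:dissemination}. First, a replica adds a block to its local DAG only once it has downloaded that block's entire causal history. Second, links between correct replicas are reliable. The plan has three steps: show that $B$ is eventually received by every correct replica; show that every block in $B$'s causal history is eventually retrievable; and conclude that $B'$ is eventually inserted into every correct replica's DAG.

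\emph{Step 1.} Since $N$ is correct and $B$ was produced by $N$, $N$ disseminates $B$ to all replicas. Links between correct parties deliver every message eventually, so each correct replica $M$ eventually receives $B$. This holds in both the partially synchronous and the random asynchronous model.

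\emph{Step 2.} Before $N$ can reference $B'$ in $B$, $N$ must have $B'$ in its own local DAG. By the insertion rule, $N$ therefore holds the complete causal history of $B'$, and indeed of $B$. When $M$ receives $B$ and finds a missing ancestor, it requests that ancestor; $N$ is correct and holds the block, so it eventually answers. The causal history of $B$ is finite, since it contains only blocks from rounds below $B$'s round and each round has at most $n$ blocks. Hence, by induction on the round number from the lowest missing ancestor upward (or, equivalently, because the sync procedure retrieves finitely many blocks, each in finite time), $M$ eventually obtains every block in the causal history of $B$, in particular $B'$ together with all of $B'$'s ancestors.

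\emph{Step 3.} Once $M$ holds the full causal history of $B'$, the insertion rule lets $M$ add $B'$ to its local DAG. This is exactly the claim of the lemma.

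I expect the main obstacle to be Step 2. The paper never spells out the block-synchronization mechanism, so the argument must make explicit the assumption that a correct replica can fetch missing ancestors from the correct author $N$, who still stores them because crash-only replicas do not discard DAG blocks they have inserted. If $N$ crashes after sending $B$, this route fails. I would then first try to argue that $N$ sent $B'$'s history before (or together with) $B$, so the messages are already in transit on reliable links. Failing that, I would argue that whoever delivered $B$ to $M$ also holds $B$'s history, since it could only relay $B$ after inserting it, which again requires the full history.
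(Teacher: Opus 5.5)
Your argument is correct and matches the paper's proof. A correct replica that receives $B$ requests each missing ancestor from the correct author $N$, which holds it and eventually delivers it over reliable links, and recursing through the finite causal history of $B'$ lets the replica insert $B'$ into its DAG. The obstacle you anticipate does not arise: in this model a correct replica never crashes (crashed replicas are by definition not correct), so $N$ remains available to answer every request and your fallback arguments are unnecessary.
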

\begin{proof}
    If some replica $N'$ receives $B$ from $N$, but does not have $B'$ yet, $N'$ will request $B'$ from $N$; since $N$ is correct and the network links are reliable, $N$ will eventually receive $N'$'s request, send $B'$ to $N'$, and $N'$ will eventually receive $B'$. The same is recursively true for any blocks from the causal history of $B'$, so $N'$ will eventually receive all blocks from the causal history of $B'$ and thus include $B'$ in its local DAG.
\end{proof}

\begin{lemma}\label{lem:eventually-include}
    If a correct replica $N$ proposes a block $B$, then every correct replica will eventually include $B$ in its local DAG.
\end{lemma}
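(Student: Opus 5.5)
The plan is to reduce the statement to \Cref{lem:eventually-all-references}, using the dissemination behaviour of a correct proposer. A correct replica $N$ adds its own block $B$ to its local DAG at creation time. It does so only after it has downloaded the entire causal history of every block $B$ references. So every parent of $B$ is already in $N$'s local DAG, and $N$ can serve requests for $B$ and for any block in $B$'s causal history. The proof then splits into two parts: delivery of $B$ itself, and delivery of its causal history.

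\textbf{Delivery of $B$.} Since $N$ is correct and disseminates $B$ to all replicas, and links between correct replicas are reliable, every correct replica $N'$ eventually receives $B$. This holds in both network models, because messages between correct parties eventually arrive.

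\textbf{Delivery of the causal history.} $N'$ inserts $B$ into its local DAG only once it holds the full causal history of $B$. Each parent $B'$ of $B$ is referenced by a block produced by the correct replica $N$. Hence \Cref{lem:eventually-all-references} applies to each $B'$: every $B'$, together with its causal history, is eventually included in $N'$'s local DAG. Because $B$ has finitely many parents, there is a finite time after which all of them are present. At that point $N'$'s insertion condition for $B$ holds, so $N'$ adds $B$ to its local DAG.

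The main subtlety I expect is justifying that the request/response recursion inside \Cref{lem:eventually-all-references} is well-founded and really terminates. I would argue this by induction on rounds. Every reference points to a strictly earlier round, so the causal history of $B$ is a finite set of blocks. Each block in that set is held by the correct replica $N$, so each request $N'$ sends to $N$ is eventually answered, and only finitely many requests are ever needed. A second point to check is that $N$ can in fact serve these requests. That follows from the block-creation rule: $N$ references only blocks whose full causal history it has already downloaded, and a correct replica never discards blocks. Since $N$ never crashes, there is no issue with the data source disappearing.
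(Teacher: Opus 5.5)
Your proposal is correct and follows the paper's proof. Reliable links deliver $B$ to every correct replica, \Cref{lem:eventually-all-references} (which you apply to each of $B$'s parents) supplies the rest of $B$'s causal history, and $B$ is then inserted. Your extra remarks on the finiteness of the causal history and on $N$ being able to serve requests just make explicit what the paper leaves to the recursive argument inside \Cref{lem:eventually-all-references}.
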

\begin{proof}
    Since network links are reliable, all correct replicas will eventually receive $B$ from $N$. By \Cref{lem:eventually-all-references}, all correct replicas will eventually receive all of $B$'s causal history, and so will include $B$ in their local DAG.
\end{proof}

\begin{lemma}\label{lem:direct-commit-prob}
    Fix a skeleton slot $S$. A replica directly commits $S$ with probability at least $1/2$.
\end{lemma}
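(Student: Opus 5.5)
We would prove \Cref{lem:direct-commit-prob} by reasoning about a single round of the random asynchronous model. Fix the slot $S$ at round $R$ with designated author $N_S$. The direct decision rule commits $S$ exactly when at least $f+1$ blocks of round $R+1$ reference the block $B_S$ in $S$. So we lower-bound the probability that this happens, and then argue the event is eventually visible to every replica.

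First we handle existence of the block. If $N_S$ is correct, then by \Cref{lem:eventually-include} its round-$R$ block $B_S$ eventually enters every correct replica's DAG. The probability is over the random delivery schedule, so the bound is stated for executions in which $N_S$ is correct and proposes in round $R$. This is the setting in which the lemma is used for liveness; for crashed skeleton authors the indirect rule and the anchor argument take over.

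The main step uses the model's assumption: the first $f+1$ round-$R$ blocks received by a replica are a uniformly random sample of size $f+1$ from the $n=2f+1$ round-$R$ blocks. Each round-$(R+1)$ block references at least these $f+1$ blocks. Hence a fixed round-$(R+1)$ block references $B_S$ with probability at least $(f+1)/(2f+1) > 1/2$. If a replica also waits for skeleton blocks before the timeout, the probability only increases.

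Next we pass from one block to $f+1$ of them. There are at least $f+1$ correct replicas producing round-$(R+1)$ blocks, and we need at least $f+1$ of them to reference $B_S$. This is the step I expect to be the main obstacle, because a per-block probability above $1/2$ does not by itself give probability $1/2$ that a majority votes for $B_S$. My first approach is a symmetry argument over the $n$ round-$R$ blocks:
\begin{itemize}
\item Each round-$(R+1)$ block includes $f+1$ of the $2f+1$ round-$R$ blocks, so the total number of references is at least $(f+1)^2$.
\item Counting "heavy" blocks (those with at least $f+1$ references) against light ones, the pigeonhole bound gives at least one heavy round-$R$ block whenever $n(f) < (f+1)^2$, which holds because $(2f+1)f < (f+1)^2$ fails only in general. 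So the argument must instead be an expectation bound.
\item With independent uniform samples per replica and $n=2f+1$ voters, the number of votes for $B_S$ is a sum of indicators with mean at least $n(f+1)/(2f+1) = f+1$.
\item A median argument for this sum (a Binomial-like distribution with mean at least $f+1$ has median at least $\lfloor f+1 \rfloor$) gives $\Pr[\text{votes} \ge f+1] \ge 1/2$.
\end{itemize}
If independence across replicas is not part of the model, I would instead invoke exchangeability of the slots under the random schedule, and check which version of the model the paper actually assumes.

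Finally, once at least $f+1$ round-$(R+1)$ blocks reference $B_S$, every replica's DAG eventually contains them by \Cref{lem:eventually-include}. At that point \textsc{TryDirectDecide} returns \texttt{Commit}, so the replica directly commits $S$ with probability at least $1/2$.
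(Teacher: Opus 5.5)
Your core argument is the paper's own: each round-$(R+1)$ block references $B_S$ with probability $(f+1)/(2f+1)$, so the vote count is $B(n,p)$ with integer mean $f+1$, and since mean equals median, at least $f+1$ votes occur with probability at least $1/2$. Your extra steps (restricting to a correct slot author, stating independence across replicas explicitly, and arguing the votes eventually become visible) make explicit what the paper leaves implicit, and you can drop the garbled pigeonhole bullet since you abandon it anyway.
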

\begin{proof}
    A block in round $R+1$ will reference a random subset of $f+1$ blocks in round $R$. The probability that a given block $B$ votes for the skeleton slot $S$ is $p = \frac{f+1}{2f+1} > \frac{1}{2}$.
    Let $V(S)$ denote the random variable representing the number of replicas in round $R+1$ that vote for $S$ in round $R$. $V(S)$ follows a Binomial distribution:
    $ V(S) \sim B(n, p) \text{, where } p = \frac{f+1}{2f+1}$ The expected value of $V(S)$ is: $E[V(S)] = n \cdot p = f+1$.

    Since the mean $\mu = f+1$ is an integer, the mean and the median of the distribution are equal~\cite{BinomialAverages}. By the definition of the median $m$, we have $\Pr(V(S) \ge m) \ge \frac{1}{2}$. Thus, the probability of a skeleton slot in round $R$ receiving at least $f+1$ votes in round $R+1$, and thus being directly committed, satisfies:
    $\Pr(V(S) \ge f+1) \ge 1/2$.
\end{proof}

\begin{lemma}\label{lem:all-slots-decide-2}
    Fix a skeleton slot $S$. Every correct replica eventually either commits or skips $S$ with probability~$1$.
\end{lemma}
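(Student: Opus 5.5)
We argue via a Borel--Cantelli-style geometric argument over future skeleton slots. Fix $S$ at round $R$. By \Cref{lem:eventually-include}, every block proposed by a correct replica eventually enters every correct replica's local DAG. Since at most $f$ replicas crash, at least $f+1$ correct replicas keep producing blocks in every round, so the DAG grows without bound at every correct replica. Consequently, for each round $R'$, every correct replica eventually holds the same set of round-$R'$ and round-$(R'+1)$ blocks authored by correct replicas. This is all it needs to evaluate the direct rule on slots of round $R'$.

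\textbf{Step 1: if $S$ or some later slot is directly committed, $S$ is decided.} If $S$ itself is directly committed, we are done. Otherwise, consider the skeleton slots $S_1, S_2, \ldots$ at rounds $R_1 < R_2 < \cdots$, all with $R_k > R+1$, listed in the predetermined total order. Suppose some $S_k$ is directly committed at a correct replica. Then the earliest such $S_k$ gives an anchor for $S$ that is eventually decided. To see this, note first that any slot between $S$ and $S_k$ that is not directly committed is skipped or committed by the indirect rule, using $S_k$ or a later committed slot as anchor. Arguing downward from $S_k$, each earlier slot's anchor is eventually committed, so each earlier slot is decided. Hence the anchor of $S$ is a committed slot. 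The indirect rule then returns either \scommit or \sskip for $S$, depending on whether a path exists. Once the relevant blocks are present, this path check is determined, so $S$ becomes decided.

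\textbf{Step 2: some slot is directly committed with probability 1.} We apply \Cref{lem:direct-commit-prob} to the slots $S_k$. Each is directly committed with probability at least $1/2$. To get independence, restrict to slots in rounds spaced at least two apart, for example one slot per round $R+2, R+4, \ldots$. The referencing choices in the random asynchronous model are fresh random samples each round, so the events ``$S_{k}$ receives $f+1$ votes'' are independent across these rounds, or at least each has conditional probability at least $1/2$ given the past. The probability that none of the first $m$ such slots is directly committed is therefore at most $2^{-m}$, which tends to $0$. So with probability 1 some $S_k$ receives $f+1$ votes. Because every correct replica eventually sees those $f+1$ round-$(R_k+1)$ blocks, every correct replica eventually directly commits $S_k$. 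Combined with Step 1, this shows that every correct replica eventually decides $S$.

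\textbf{Main obstacle.} The delicate part is the conditioning in Step 2. \Cref{lem:direct-commit-prob} is stated for a single slot. We need the bound $1/2$ to hold conditionally on the history of earlier rounds, and for a skeleton author that may be correct or crashed. For crashed authors the slot may simply be empty and never committed. The fix is to restrict the sequence to slots whose author is correct; infinitely many such slots exist, for instance under round-robin assignment. For these slots we invoke the randomness of the first $f+1$ blocks received in each round, which gives conditional probability at least $1/2$ per slot and yields the geometric bound.
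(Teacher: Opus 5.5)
Your route is the paper's: bound the probability of long stretches without a direct commit, then conclude that $S$ acquires a committed anchor. The paper only asserts that last step. Your Step~1 tries to justify it, and that justification fails.

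Consider a slot $T$ that precedes $S_k$ in the total order but lies in round $R_k-1$, or in round $R_k$ before $S_k$. It can never use $S_k$ as its anchor, because its anchor must lie in a round greater than $\mathrm{round}(T)+1$. Its fate therefore depends on slots beyond $S_k$, which Step~1 does not control. If such a $T$ stays undecided and precedes every committed candidate, then $T$ itself is the anchor of $S$ (an undecided slot is not \sskip), and $S$ stays undecided. Concretely, take two slots per round, $a_r$ first and $b_r$ second. Suppose every $b_r$ with $r\ge R+2$ is directly committed while no $a_r$ is. Then the anchor of $S$ is $a_{R+2}$, whose anchor is $a_{R+4}$, and so on, so $S$ is never decided although infinitely many slots are directly committed. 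So the event Step~2 establishes is too weak to imply the conclusion. Restricting to correct authors does not help, since the $a_r$ may have correct authors and simply miss $f+1$ votes.

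What is missing is a \emph{closing} event. One that works: for some $r\ge R+1$, the first slots (in the total order) of rounds $r$ and $r+1$ are both directly committed. Then every slot in rounds $\le r-1$ is decided, by downward induction on the round:
\begin{itemize}
\item slots in rounds $r-1$ and $r-2$ have exactly these two slots as anchors;
\item a slot in round $j\le r-3$ has as anchor either the first committed slot among the already-decided rounds $j+2,\dots,r-1$ or, if all of those are skipped, the first slot of round $r$.
\end{itemize}

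To get this event with probability $1$ you must use the leader schedule, which your proof never does. Under the round-robin rule of \textsc{GetLeaderBlock}, the first slot of round $r$ belongs to replica $r \bmod n$. Since at least $f+1$ of the $2f+1$ replicas are correct, two cyclically adjacent ones are, so once every $n$ rounds both first slots have correct authors. Each such pair closes with conditional probability at least $1/4$, because the two events depend on references made in different rounds. Consecutive rounds thus pose no independence problem, and your spacing by two is unnecessary (it would in fact exclude this event). The geometric bound then applies to these windows.

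The schedule genuinely matters: under a schedule where the first slot of every other round belongs to a crashed replica, those slots remain undecided forever. Finally, ``every correct replica eventually sees those $f+1$ blocks'' follows from \Cref{lem:eventually-include} only for voters that are correct, so the vote count should be restricted accordingly.
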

\begin{proof}
    By \Cref{lem:direct-commit-prob}, the probability of a correct replica directly committing any skeleton block in a given round is a constant ($\geq 1/2$).
    Thus, the probability that a sequence of $t$ rounds without any directly committed slot is less than $2^{-t}$. This implies that w.h.p., every slot that is not directly decided will eventually have a committed anchor and therefore become decided (\ie it will be committed or skipped).
\end{proof}

\begin{thm}[Validity]
    \sysname satisfies the validity property of Atomic Broadcast, with probability $1$.
\end{thm}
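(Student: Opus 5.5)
Validity requires that if a correct replica $N$ proposes a value $v$ (by including the transaction in one of its blocks $B$, say at round $R$), then $N$ eventually commits $v$ with probability $1$. The plan is to show that $B$ eventually enters the causal history of some committed skeleton block, and that all slots up to that skeleton block are decided at $N$.

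\emph{Step 1: $B$ reaches every correct replica.} By \Cref{lem:eventually-include}, every correct replica eventually includes $B$ in its local DAG.

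\emph{Step 2: $B$ is eventually referenced by every later block.} Since $N$ is correct, it keeps producing blocks. Each of its later blocks references its own most recent block first, so $N$'s block at every round $R'>R$ has a path to $B$. For the other replicas, we use the random asynchronous model. A block at round $R+1$ references a random sample of $f+1$ blocks from round $R$, so it references $B$ with probability at least $\frac{f+1}{2f+1}$. The argument of \Cref{lem:direct-commit-prob} then gives that $B$ collects $f+1$ votes in round $R+1$ with probability at least $1/2$. By \Cref{lem:path-exists}, all blocks at later rounds then have a path to $B$. If this fails, we apply the same argument at round $R+1$ to $N$'s block $B_{R+1}$, which has a path to $B$. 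Since the rounds are independent, with probability $1$ there is some round $R^\ast \ge R$ at which $N$'s block gathers $f+1$ votes. From $R^\ast+1$ onward, every block has a path to $B$.

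\emph{Step 3: a committed skeleton block captures $B$.} Fix any skeleton slot $S$ at a round $>R^\ast+1$. By \Cref{lem:all-slots-decide-2}, with probability $1$ all slots up to $S$ are eventually decided at $N$. Infinitely many slots are committed with probability $1$: each slot is directly committed with probability at least $1/2$ by \Cref{lem:direct-commit-prob}, so with probability $1$ some slot at a round $>R^\ast+1$ is directly committed. Call such a committed slot $L$. By Step 2, $L$'s block has a path to $B$. All slots up to $L$ are decided at $N$, so $B$ is committed with $L$, or with an earlier committed skeleton block if $B$ was already captured there. Either way, \textsc{LinearizeSubDags} outputs $B$, and hence $v$.

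\emph{Main obstacle.} The delicate point is the probabilistic independence across rounds in Step 2 and Step 3. We need the vote events for successive rounds to be independent, or at least to each have conditional probability bounded below by $1/2$ given the past, so that a countable intersection of failures has probability $0$. I would phrase this as a Borel--Cantelli / geometric-tail argument: the probability that the event fails for $t$ consecutive rounds is at most $2^{-t}$, which tends to $0$. This uses the model's assumption that each replica's first $f+1$ received blocks form a fresh random sample in each round. A secondary point is that the skeleton slot must belong to a correct replica for its block to exist. However, the argument only needs \emph{some} committed slot after $R^\ast+1$, which is guaranteed with probability $1$ by the same geometric argument over rounds whose skeleton slots are nonempty.
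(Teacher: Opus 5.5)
Your argument is correct, but it captures $B$ by a different mechanism than the paper does. The paper's capture step is structural. From \Cref{lem:eventually-include} it infers that every correct replica eventually references $B$ in one of its own blocks. It picks a round $R$ by which this has happened and concludes that any block past $R$ has a path to $B$. Randomness enters only afterwards: to produce a later directly committed skeleton block $B'$, and, via \Cref{lem:all-slots-decide-2}, to decide every slot before it.

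You never need other replicas to reference $B$. You follow $N$'s own chain of blocks, all of which descend from $B$. The random-sampling assumption shows that some $B_{R^\ast}$ eventually gathers $f+1$ votes. You then invoke \Cref{lem:path-exists}, which the paper uses only for safety, to force every later block to reach $B$.

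Your route buys two things. First, it does not depend on the step ``every correct replica will eventually include a reference to $B$.'' That step rests on a reference-all-pending-blocks rule which the block-creation description (``any additional references from earlier rounds'') states only loosely. Second, it treats uniformly the blocks authored by replicas that later crash, which the paper's ``all replicas have $B$ in their causal histories by round $R$'' glosses over. The price is that your capture step consumes fresh per-round randomness on $N$'s particular blocks. It therefore leans on exactly the conditional-independence assumption you flag, whereas the paper's capture step uses no probabilistic reasoning at all.

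One small off-by-one: votes cast in round $R^\ast+1$ give, by \Cref{lem:path-exists}, paths only for blocks at rounds $\ge R^\ast+2$, not ``from $R^\ast+1$ onward.'' Non-voting blocks at round $R^\ast+1$ may miss $B$. Your Step 3 already restricts to slots at rounds $>R^\ast+1$, so nothing breaks.
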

\begin{proof}
    Let $N$ be a correct replica and $B$ a block proposed by $N$. We show that, with probability $1$, $B$ is eventually committed by every correct replica. By \Cref{lem:eventually-include}, $B$ is eventually included in the local DAG of every correct replica.
    So every correct replica will eventually include a reference to $B$ in at least one of its blocks. Let $R$ be the highest round at which some correct replica includes a reference to $B$ in one of its blocks. By \Cref{lem:direct-commit-prob}, with probability $1$, each skeleton block has a nonzero probability of being directly committed, so eventually some block $B'$ at a round $R' > R$ will be directly committed. Since all replicas have $B$ in their causal histories by round $R$, $B'$ must therefore have a path to $B$. \Cref{lem:all-slots-decide-2} guarantees that all slots before $B'$ are eventually decided, so $B'$ is eventually committed. Thus, $B$ will be committed at all correct replicas at the latest when $B'$ is committed alongside its causal history.
\end{proof}

\begin{thm}[Agreement]
    \sysname satisfies the agreement property of Atomic Broadcast, with probability $1$.
\end{thm}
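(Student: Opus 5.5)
We want to show that if some correct replica $N$ commits a value $v$ (contained in a block $B$), then with probability $1$ every correct replica $N'$ eventually commits $v$. The plan is to reduce agreement to the facts already established: safety of skeleton decisions (\Cref{lem:consistent}, \Cref{obs:agree-commit}, \Cref{lem:committed-same-slot}) together with eventual decision of every slot (\Cref{lem:all-slots-decide-2}) and eventual DAG convergence (\Cref{lem:eventually-all-references}, \Cref{lem:eventually-include}).

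\textbf{Step 1: $N'$ eventually holds the committing skeleton block and its history.} Let $L$ be the skeleton block in slot $S$ with which $N$ commits $B$, so $B$ lies in $L$'s causal history in $N$'s DAG. Since $N$ included $L$ in its local DAG, it has downloaded all of $L$'s causal history. The argument of \Cref{lem:eventually-all-references} applies with $N$ as the serving correct replica: $N'$ can request any missing ancestor from $N$. Hence $N'$ eventually holds $L$ and its full causal history, including $B$. Because causal histories are immutable (references are fixed at block creation), $B$ is in $L$'s causal history at $N'$ exactly as at $N$.

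\textbf{Step 2: $N'$ eventually decides $S$ and every earlier slot.} By \Cref{lem:all-slots-decide-2}, with probability $1$ every correct replica eventually decides each fixed slot. There are countably many (in fact finitely many) slots up to $S$, and a countable intersection of probability-$1$ events has probability $1$. So almost surely $N'$ eventually decides all slots up to and including $S$. By \Cref{lem:consistent}, $N'$ decides each of these slots the same way as $N$; in particular $N'$ commits $S$. By \Cref{obs:agree-commit}, the block $N'$ commits in $S$ is the same block $L$.

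\textbf{Step 3: $N'$ commits $B$.} The conditions of the ``causal history and commit conditions'' paragraph now hold at $N'$ for $L$. Either $B$ was already committed at a lower slot at $N'$, or it is committed with $L$. In both cases $N'$ commits $B$, and thus $v$.

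The main obstacle is Step 2. The argument needs \Cref{lem:all-slots-decide-2} to apply to all earlier slots simultaneously, which is handled by the countable-intersection argument. It also needs the decisions of $N'$ to coincide with those of $N$, which follows from \Cref{lem:consistent} once each slot is decided at both replicas. A secondary subtlety is Step 1: $L$ may have been produced by a replica that later crashed, so we cannot appeal to \Cref{lem:eventually-include}, which is stated for blocks proposed by correct replicas. Instead we rely on $N$, which is correct, as the source from which $N'$ fetches $L$ and its ancestors.
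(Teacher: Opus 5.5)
Your proof is correct and follows the same route as the paper's. You invoke \Cref{lem:all-slots-decide-2} to get every slot up to $S$ decided at $N'$, then \Cref{obs:agree-commit} to conclude that $N'$ commits the same $L$ and hence $B$. Your explicit use of \Cref{lem:consistent} to rule out $N'$ skipping $S$, and the finite-intersection remark, fill in steps the paper leaves implicit. Your Step 1 is unnecessary: once $N'$ commits $L$ in $S$, $L$ is in $N'$'s local DAG, which (by the insertion rule) already contains $L$'s full causal history, including $B$.
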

\begin{proof}
    Let $N$ be a correct replica and $B$ a block committed by $N$. We show that, with probability $1$, $B$ is eventually committed by every correct replica. Let $L$ be the skeleton block with which $B$ is committed and $S$ the corresponding slot. By \Cref{lem:all-slots-decide-2}, all blocks up to and including $S$ are eventually decided by all correct replicas, with probability $1$. By \Cref{obs:agree-commit}, all correct replicas commit $L$ in $S$. Eventually, all correct replicas commit $B$.
\end{proof}

%% file: sections/implementation.tex
\section{Implementation} \label{sec:implementation}

We implemented \sysname in Rust, building on the open-source \mysticeti codebase~\cite{mysticeti-code}, which contains approximately $14{,}000$ lines of code (LOC). Our implementation uses \texttt{tokio}~\cite{tokio} for asynchronous networking and raw TCP sockets for replica-to-replica communication.

To support crash recovery and ensure data persistence, we implemented a Write-Ahead Log (WAL) tailored to \sysname. This provides stronger resilience than existing implementations of Multi-Paxos~\cite{paxos-raft-code} and QuePaxa~\cite{quepaxa-code}, which lack mechanisms for recovering after crashes, though this absence arises from implementation choices rather than protocol specifications.
Following prior implementations of Multi-Paxos, QuePaxa, and Rabia~\cite{rabia-code}, we support batching at the replica level to amortize communication costs by packing multiple commands into a single message.
We collocated front- and back-end within the same \sysname binary.
Each front-end communicates with its assigned back-end, a design choice common in existing consensus protocol implementations~\cite{epaxos-code-modified}.

We also implement a benchmark framework to simulate performance under a random asynchronous network model. 
The framework forces each consensus instance to gather a randomly selected quorum to commit rather than rely on whichever majority returns first. 
This shifts performance away from the quickest or geographically closest replicas and instead drives execution through a random message-delivery schedule that removes any bias toward low-latency quorums. 
We use this framework only in \Cref{subsec:eval-random} and rely on standard communication patterns for the rest of our evaluation in \Cref{sec:evaluation}.
To the best of our knowledge, this is the first work to introduce an evaluation framework for measuring protocol performance under the random asynchronous model, and we present this evaluation setup as a novel contribution of independent interest.
We will open-source our \sysname implementation upon acceptance to support reproducibility and enable further research in this area.

%% file: sections/evaluation.tex
\section{Evaluation}\label{sec:evaluation}

Our evaluation compares performance, scalability, and resource efficiency to existing CFT consensus protocols. 
We experiment across a range of execution conditions, including favorable (fault-free, timely communication), crash-fault scenarios, and both normal and unstable network conditions.

\input{sections/evaluation/baselines}
\input{sections/evaluation/setup}
\input{sections/evaluation/normal-case}
\input{sections/evaluation/scalability-num-replicas}
\input{sections/evaluation/scalability-payload}
\input{sections/evaluation/crash}
\input{sections/evaluation/random-async}
\input{sections/evaluation/bft}
\input{sections/evaluation/cpu}

%% file: sections/evaluation/baselines.tex
\subsection{Baselines}\label{subsec:experiment_related}

We compare the performance of \sysname against state-of-the-art CFT protocols:
Multi-Paxos~\cite{lamport2001paxos, paxos-raft-code}, EPaxos~\cite{moraru2013there, epaxos-code}, QuePaxa~\cite{tennage2023quepaxa, quepaxa-code},
Rabia~\cite{pan2021rabia, rabia-code},
and SADL-RACS~\cite{tennageracs}

We select these baselines for their contrasting designs, ensuring coverage of a broad spectrum of protocol architectures. 
Multi-Paxos uses a classic leader-based approach that channels all requests through a single leader. 
EPaxos adopts a multi-leader design in which every replica can propose and commit commands concurrently under dependency constraints. 
EPaxos produces only a partial order, a weaker abstraction than the total order that \sysname enforces. 
QuePaxa represents the current state of the art in randomized consensus: it matches Multi-Paxos under synchronous conditions and switches to a fallback mode that preserves throughput under asynchrony, which makes it well suited for wide-area deployments. 
Rabia uses randomized techniques to optimize for low-latency local-area settings; we include it for completeness. 
SADL-RACS achieves high throughput by separating command dissemination from the consensus critical path.
We include QuePaxa, Multi-Paxos, EPaxos, and \sysname in all subsequent experiments. 
Due to page-limit constraints, we report only best-case evaluations for SADL-RACS and Rabia.
We observed that the existing EPaxos codebase~\cite{epaxos-code} does not support more than five replicas, an implementation limitation noted by prior work~\cite{tennage2023quepaxa, tennageracs}.
Therefore, we omit using EPaxos in the scalability experiment in ~\Cref{subsec:eval-scala-numreplicas}.
\sysname is designed for crash fault-tolerant systems; however, for completeness, we also compare \sysname against two state-of-the-art DAG-based, partially synchronous BFT consensus protocols: Bullshark~\cite{bullshark} and Mysticeti~\cite{mysticeti2023}.
We select these two BFT protocols because they share key architectural features with \sysname and because both are widely deployed in production systems~\cite{sui-lutris,sui-code}.

%% file: sections/evaluation/setup.tex
\subsection{Experimental setup}\label{subsec:experiment_setup}

We deploy replicas on Amazon EC2 virtual machines~\cite{amazon-ec2} of type \texttt{c5.4xlarge}.
Each machine provides 10\,Gbps of bandwidth, 16 virtual CPUs on a 2.5\,GHz Intel Xeon Platinum 8124M, 32\,GB memory, and runs Ubuntu Linux 22.04~LTS~\cite{ubuntu}.
We select these machines because they provide decent performance, are in the price range of ``commodity servers'', and are in line with recent related work~\cite{mysticeti2023,mahi-mahi}.
We conducted experiments in a wide-area network (WAN) with replicas distributed across multiple AWS regions: Cape Town (\texttt{af-south-1}), Hyderabad (\texttt{ap-south-2}), Jakarta (\texttt{ap-southeast-3}), Osaka (\texttt{ap-northeast-3}), and Milan (\texttt{eu-south-1}).

We instantiate several geo-distributed benchmark clients collocated with each replica, submitting transactions in an open loop model~\cite{schroeder2006open}, at a fixed rate.
We experimentally increase the submission rate and record the throughput and latency of commits.
As a result, all plots illustrate the steady-state latency of all systems under load, as well as the maximal throughput they can provide, after which latency increases rapidly.
We verified that the CPU, memory, bandwidth, and storage resources are sufficient to host both the replica and client for all protocols under test.

Following prior studies~\cite{pan2021rabia, tennage2023quepaxa} and production systems~\cite{bronson2013tao}, we set the command size to 18~bytes.
We set the timeouts at 5\,sec in each protocol.
For each protocol, we measure end-to-end commit latency.
Throughput is measured in commands per second (cmd/sec), where one command corresponds to a single 18-byte request. 
All protocols use batching to amortize network latency.

%% file: sections/evaluation/normal-case.tex
\subsection{Normal case WAN performance}\label{subsec:eval-normal}

\begin{figure}[t]
    \centering
    \includegraphics[width=\columnwidth]{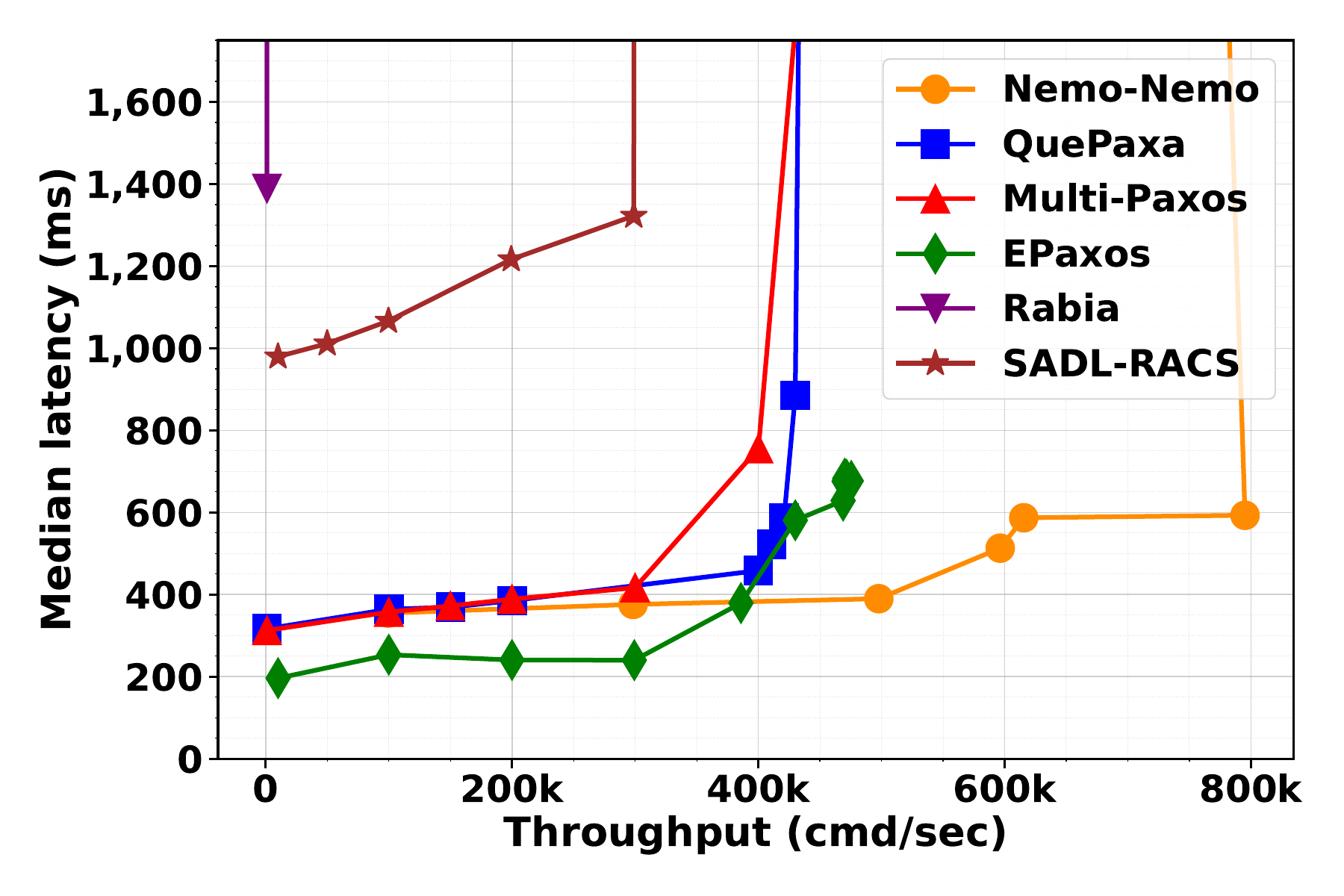}
    \vskip -1em
    \caption{
        Performance under normal case WAN execution.
    }
    \label{fig:best_case_5_replica_throughput_latency}
\end{figure}

We evaluate the normal-case performance of \sysname in a WAN under favorable (``synchronous'') network conditions. \Cref{fig:best_case_5_replica_throughput_latency} shows the throughput and median latency.

\para{Multi-Paxos}
We observe in \Cref{fig:best_case_5_replica_throughput_latency} that \sysname reaches a saturation throughput of 800k\,cmd/s at a median latency of 600\,ms.
In contrast, Multi-Paxos reaches only 400k\,cmd/s at 750\,ms.
This corresponds to a 2x throughput improvement and a 20\% latency reduction for \sysname over Multi-Paxos.

\sysname's throughput advantage over Multi-Paxos stems from two key factors.
First, Multi-Paxos funnels all client traffic through a single leader, creating a bottleneck, whereas \sysname distributes load uniformly across all replicas (see \Cref{sec:dissemination}).
Second, \sysname amortizes a single message per batch of commands (see \Cref{sec:consensus}), while Multi-Paxos requires ten messages per batch in a five-replica deployment, consisting of five \textit{propose} messages followed by five \textit{accept} messages.

\para{EPaxos}
We observe in \Cref{fig:best_case_5_replica_throughput_latency} that EPaxos delivers over 400k\,cmd/s at a median latency of 440\,ms, compared to \sysname which reaches 800k\,cmd/s at 600\,ms.
Both EPaxos and \sysname distribute load among all participating nodes and therefore outperform leader-based Multi-Paxos in latency.

For a throughput below 300k\,cmd/s, EPaxos achieves a 130\,ms lower median latency than \sysname.
This latency advantage arises because EPaxos provides only a partial order of commands, which incurs less coordination overhead than \sysname, which enforces a total order.
In this experiment, we configured EPaxos with a 2\% conflict rate~\cite{tollman2021epaxos}, meaning that 98\% of commands commit in a single round trip, resulting in lower commit latency than \sysname at moderate load.

However, beyond 400k\,cmd/s, \sysname achieves roughly 2× the throughput of EPaxos.
This gap stems from two factors.
(1) The EPaxos implementation suffers from implementation limitations—most notably a single-threaded main execution path—whereas \sysname employs a highly parallelized design.
(2) Under high arrival rates, EPaxos’s conflict-resolution overhead becomes a dominant bottleneck.
By avoiding specialized conflict resolution, \sysname sustains substantially higher throughput.

EPaxos is well suited for applications requiring only partial order with low conflict rates, such as key-value stores where per-key ordering is sufficient.
By contrast, \sysname targets applications requiring a total order of all commands, providing higher throughput for workloads that demand strong consistency.

\para{QuePaxa}
We observe in \Cref{fig:best_case_5_replica_throughput_latency} that QuePaxa achieves only 400k\,cmd/s at a median latency of 460\,ms. Under normal-case executions, QuePaxa\footnote{
    Ideally, both QuePaxa and Multi-Paxos should incur the same overhead and achieve comparable performance in normal-case operation.
    However, the QuePaxa implementation yields higher throughput than Multi-Paxos because it leverages gRPC-based high-speed messaging and multi-threading, whereas the Multi-Paxos implementation employs single-threaded execution.}
funnels all commands through a single leader replica.
As a result, QuePaxa's performance is limited by the leader replica's capacity, whereas \sysname achieves higher throughput and lower latency by load balancing across all participating nodes.

\para{SADL-RACS}
SADL-RACS uses a mempool to disseminate client commands off the consensus critical path.
In our deployment, it achieves a saturation throughput of 300k\,cmd/s with a median latency of 1.35\,s.
In comparison, \sysname reaches 800k\,cmd/s and 600\,ms, corresponding to a 166\% higher throughput and 125\% lower latency.

The performance advantage of \sysname arises from its avoidance of extra dissemination round trips.
SADL-RACS requires two additional hops to propagate commands, which significantly increases latency.
By embedding command dissemination into the DAG and balancing load across replicas, \sysname sustains higher throughput and lower latency.

\para{Rabia}
In our WAN deployment, Rabia delivers only 1k\,cmd/s with latency above 1.25\,s, whereas \sysname achieves 800k\,cmd/s at 600\,ms.
Its lower WAN performance stems from its design assumptions and optimizations for low-latency, high-bandwidth LANs.
As network diameter and latency variability increase, Rabia's design assumptions no longer hold, leading to severely reduced throughput and higher latency.

\para{Tail latency}
None of the evaluated systems is optimized for tail latency, and in practice we observe comparable percentile 99 behavior across all protocols.
Under a 1\,s percentile 99 bound, \sysname sustains 400k\,cmd/s, while EPaxos reaches 375k\,cmd/s, and both QuePaxa and Multi-Paxos reach roughly 300k\,cmd/s.
These results are consistent with prior findings reported in related work\cite{tennage2023quepaxa}.

\para{Summary}
Across our experiments, \sysname consistently outperforms existing CFT protocols in WAN deployments, achieving higher throughput and lower latency.
Its advantages arise from uniform load distribution, minimal message delays, and avoidance of specialized assumptions (such as latency bounds) or conflict-resolution bottlenecks.
These observations validate our vision: \sysname is well suited for applications that require total ordering, high throughput, and low latency, in cloud and wide-area environments.

%% file: sections/evaluation/scalability-num-replicas.tex
\subsection{Scalability with number of replicas}\label{subsec:eval-scala-numreplicas}

\begin{figure}[t]
    \centering
    \includegraphics[width=\columnwidth]{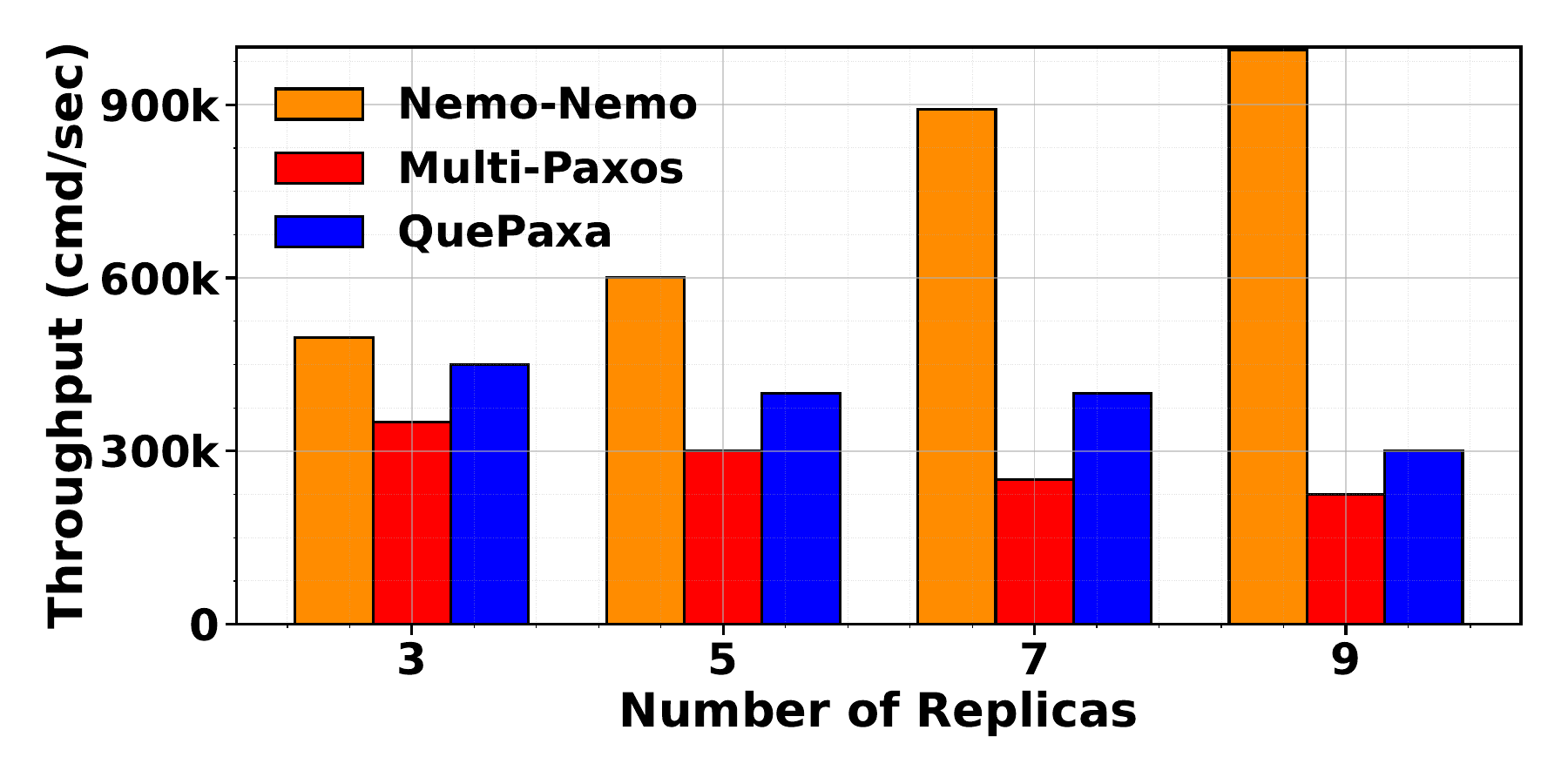}
    \vskip -1em
    \caption{
        Scalability with replication factor. Saturation throughput under 500\,ms median latency.
    }
    \label{fig:scalability_num_replicas}
\end{figure}

We evaluate how \sysname scales as the number of replicas increases.
We deploy ensembles of 3, 5, 7, and 9 replicas across geographically distributed AWS regions and compare against pipelined Multi-Paxos and QuePaxa\footnote{Due to an implementation error, EPaxos only supports up to 5 replicas and is excluded from this experiment.}.
\Cref{fig:scalability_num_replicas} shows the saturation throughput under 500\,ms median latency.

\para{Leader-based performance degradation}
Multi-Paxos and QuePaxa exhibit throughput degradation of 35\% and 33\%, respectively, when scaling from 3 to 9 replicas.
The performance degradation stems from leader-centric designs: as the replication factor grows, the leader must handle more messages due to larger quorum sizes, creating a resource bottleneck.

\para{Multi-leader scalability}
In contrast, \sysname's throughput doubles from 500k to 1M\,cmd/sec over the same range, demonstrating superior scalability.
\sysname exhibits opposite behavior due to its multi-leader DAG-based architecture.
With only 3 replicas, \sysname underutilizes available network and CPU resources.
Adding more replicas improves resource multiplexing across leaders, increasing throughput.
Furthermore, more replicas increase the number of blocks per round and the causal history size of leader blocks without additional network hops, enabling higher parallelism.

%% file: sections/evaluation/scalability-payload.tex
\subsection{Scalability with command size}\label{subsec:eval-scala-payload}

\begin{figure}[t]
    \centering
    \includegraphics[width=\columnwidth]{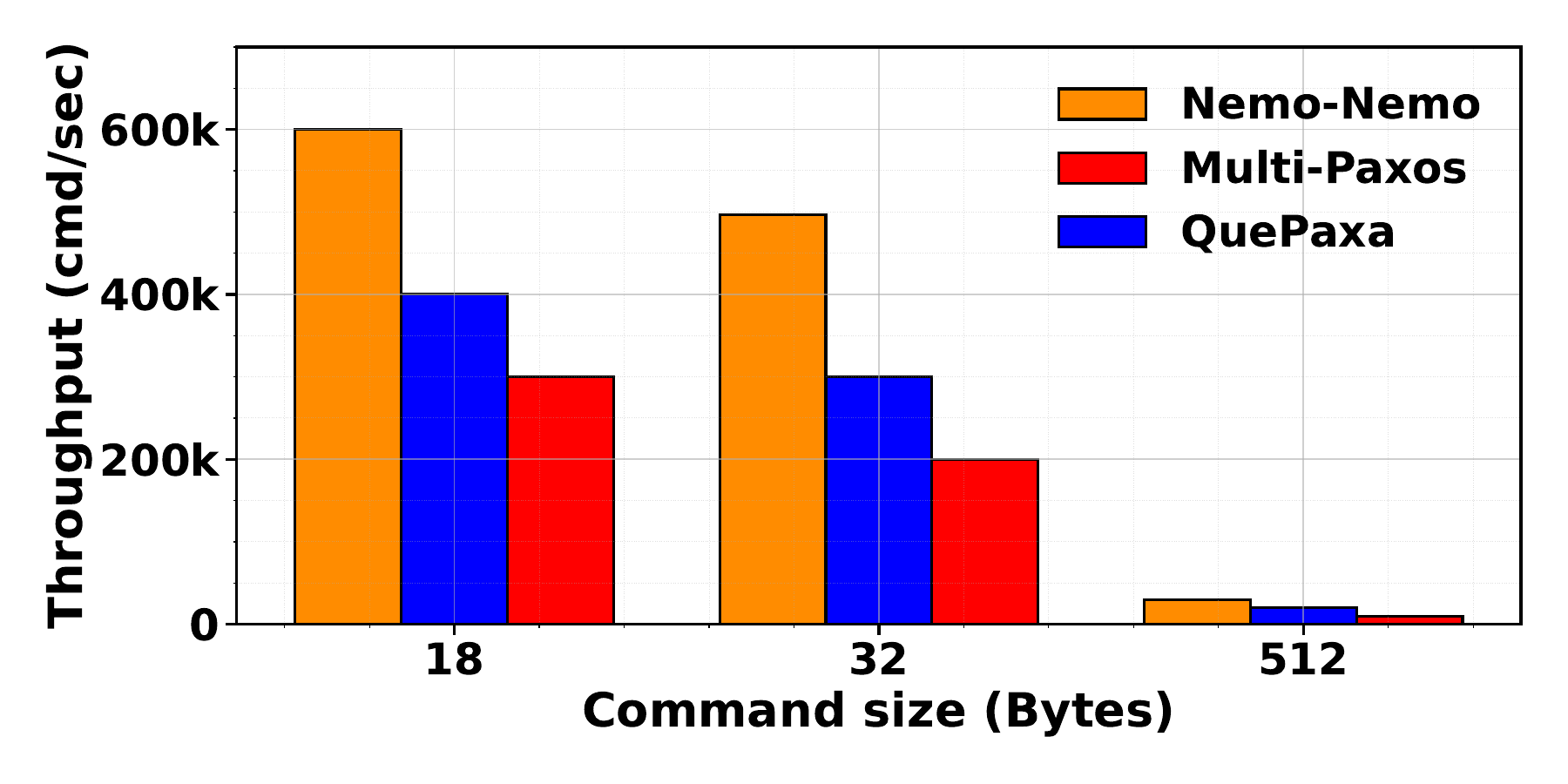}
    \vskip -1em
    \caption{
        Scalability with command size. Saturation throughput under 500\,ms median latency and 5 replicas.
    }
    \label{fig:scalability_message_size}
\end{figure}

We evaluate the impact of command size on \sysname's performance.
We experiment with three command sizes: 18\,B, 32\,B, and 512\,B, as used in recent SMR work~\cite{alimadadi2023waverunner,tennageracs}.
We deploy \sysname, pipelined Multi-Paxos, and pipelined QuePaxa in a WAN setting with 5 replicas. \Cref{fig:scalability_message_size} depicts the saturation throughput under 500\,ms median latency.\footnote{The EPaxos implementation does not allow configurable command sizes and was omitted from this experiment.}

As command size increases from 18\,B to 512\,B, Multi-Paxos degrades from 300k to 10k\,cmd/s, while QuePaxa drops from 400k to 20k\,cmd/s.
In contrast, \sysname maintains 50--66\% higher throughput at 18\,B and 32\,B, and still retains a 5\% advantage at 512\,B, degrading from 600k to 30k\,cmd/s.

This performance gap stems from fundamental architectural differences.
In Multi-Paxos and QuePaxa, the single leader must handle all client commands and broadcast them to followers, making the leader's network bandwidth the bottleneck as command size grows.
\sysname's multi-leader DAG-based design distributes bandwidth usage evenly across all replicas, avoiding the leader bottleneck entirely.

%% file: sections/evaluation/crash.tex
\subsection{Performance under crash failures}\label{subsec:eval-crash}

\begin{figure*}[t]
    \centering
    \begin{subfigure}[b]{0.24\linewidth}
        \centering
        \includegraphics[width=\textwidth]{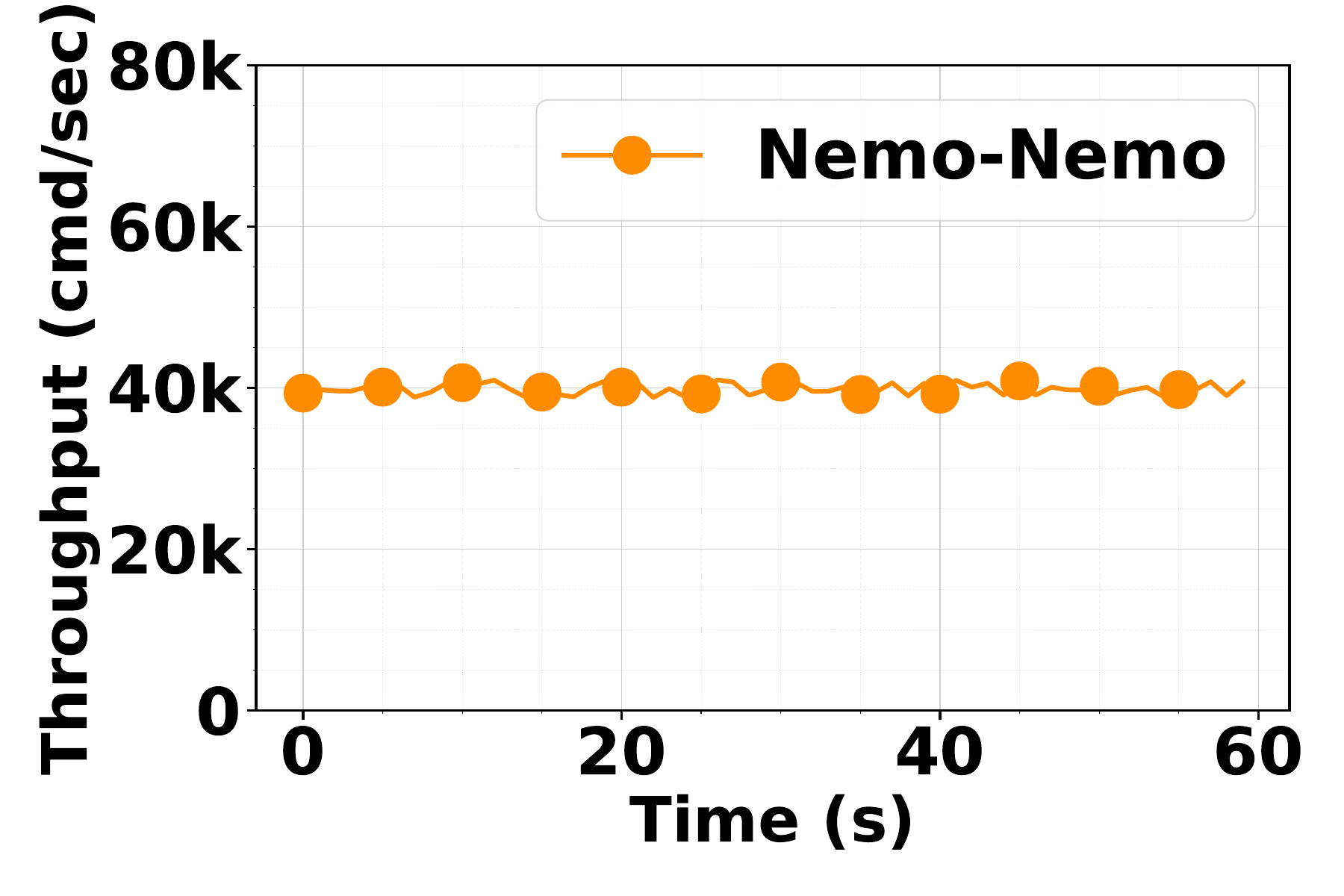}
        \vskip -1em
        \label{fig:nemo_crash}
    \end{subfigure}
    \begin{subfigure}[b]{0.24\linewidth}
        \centering
        \includegraphics[width=\textwidth]{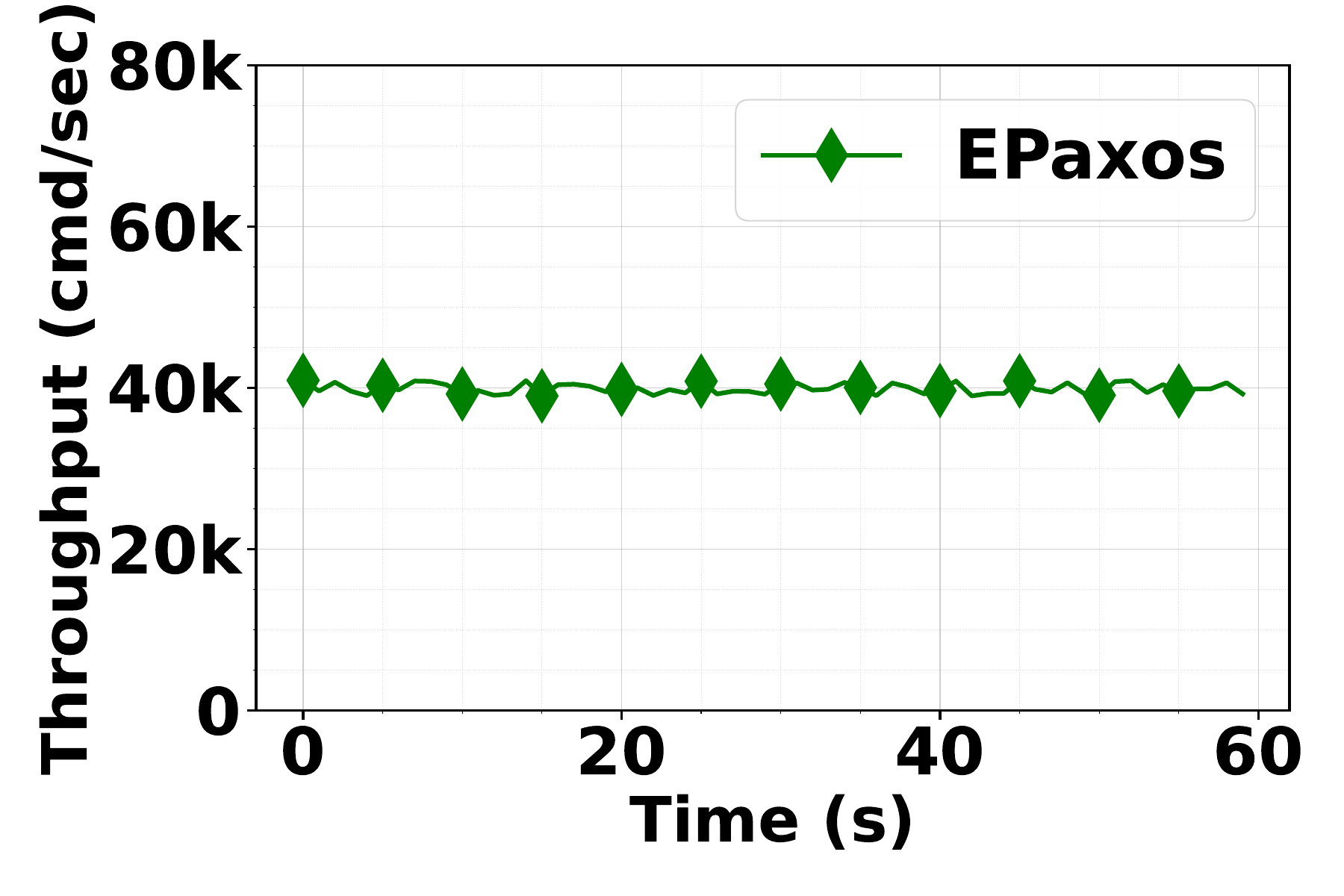}
        \vskip -1em
        \label{fig:epaxos_crash}
    \end{subfigure}
    \begin{subfigure}[b]{0.24\linewidth}
        \centering
        \includegraphics[width=\textwidth]{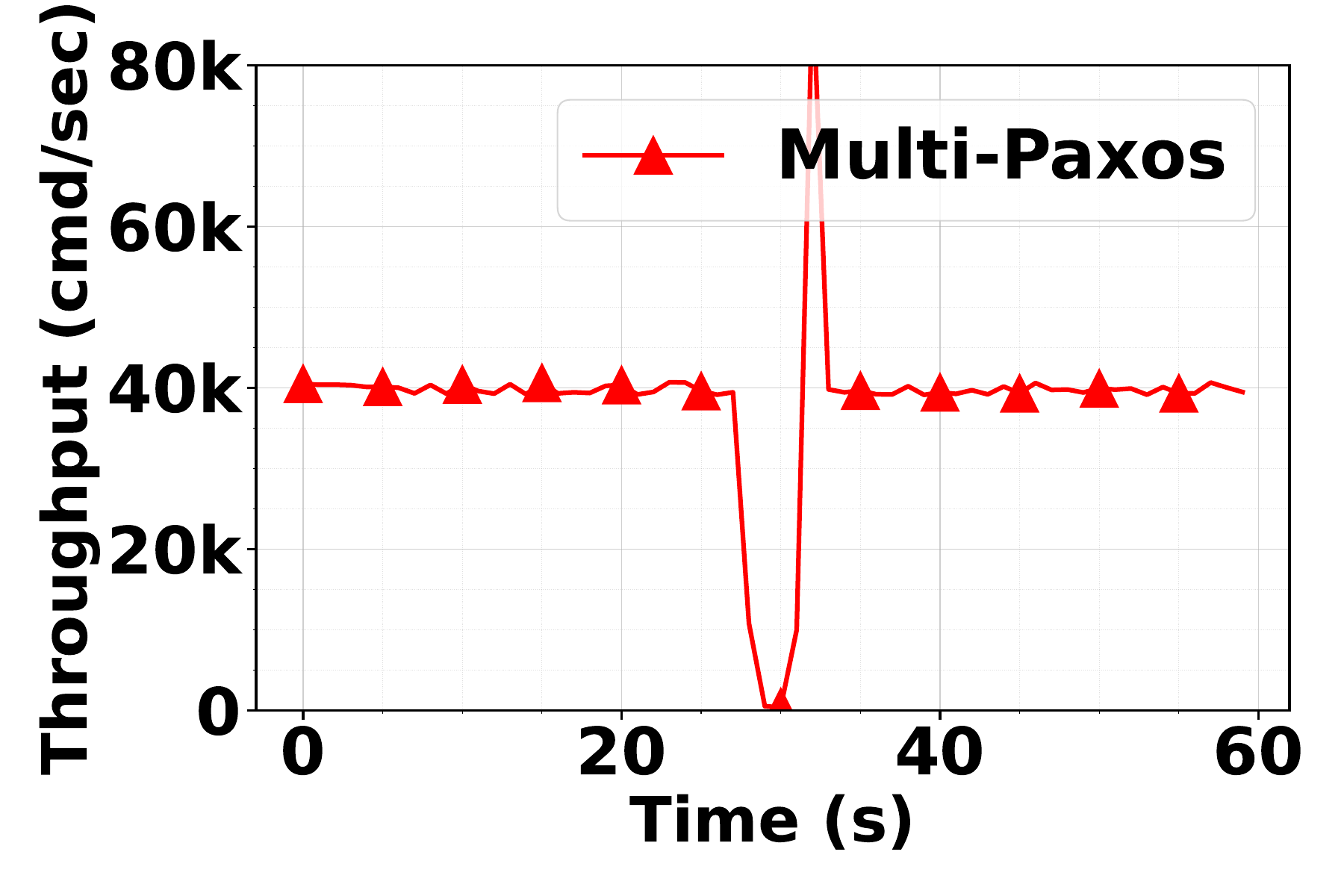}
        \vskip -1em
        \label{fig:paxos_crash}
    \end{subfigure}
    \begin{subfigure}[b]{0.24\linewidth}
        \centering
        \includegraphics[width=\textwidth]{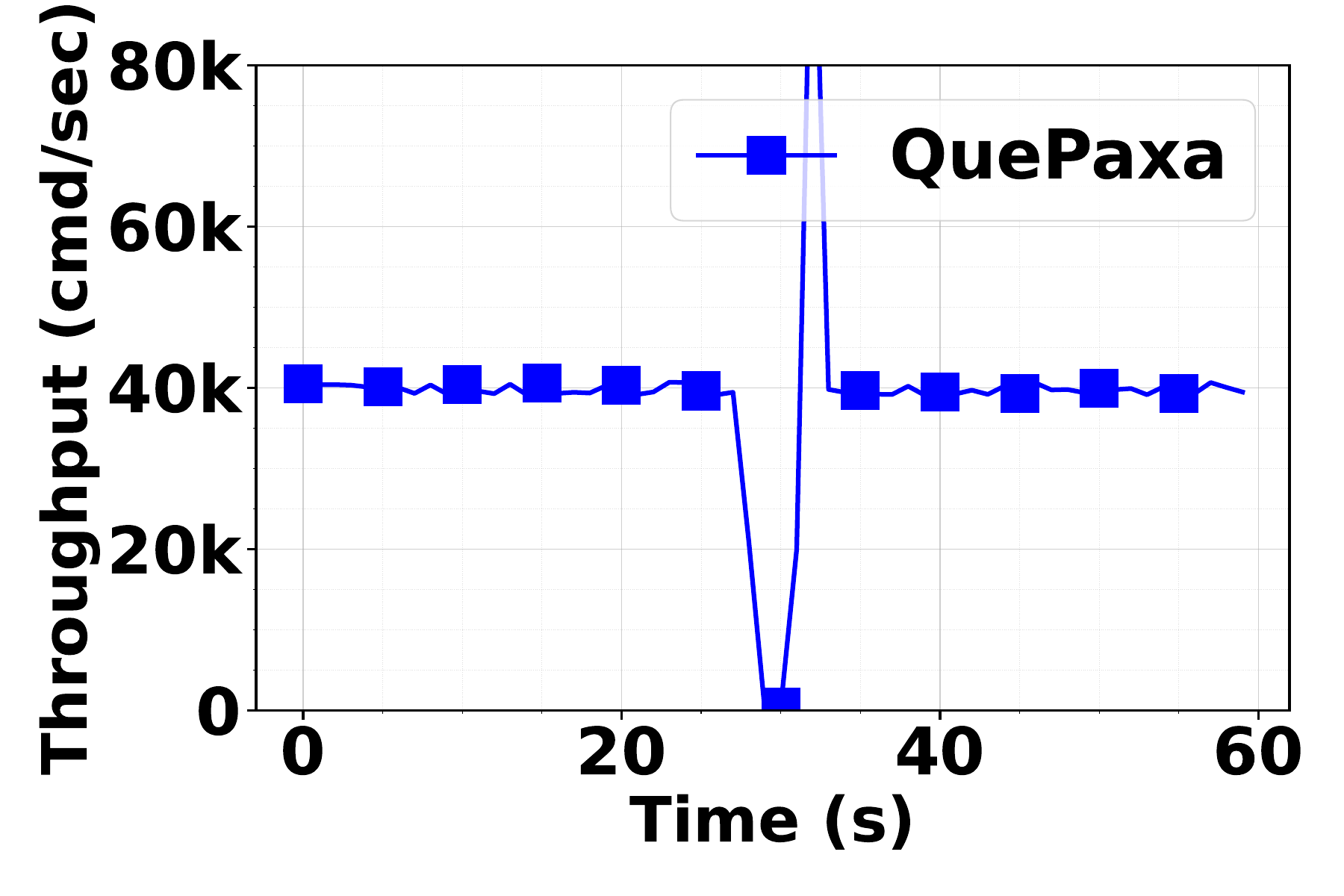}
        \vskip -1em
        \label{fig:quepaxa_crash}
    \end{subfigure}

    \caption{
        Throughput under crash failures with 5 replicas.
        At 25 seconds, we crash the leader in Multi-Paxos and QuePaxa, and a random replica in \sysname and EPaxos.
        Constant arrival rate of 40k\,cmd/sec.
    }
    \label{fig:crash_throughput}
\end{figure*}

We evaluate how each system handles crash failures by deploying five replicas in a WAN setup under a constant load of 40k\,cmd/sec.
At 25 seconds, we crash the leader replica in Multi-Paxos and QuePaxa, and a random replica in \sysname and EPaxos.
\Cref{fig:crash_throughput} shows the throughput over time.

\para{Leader-based protocols suffer downtime}
Multi-Paxos throughput drops to zero immediately upon leader crash and remains at zero during the leader election (view change).
Once a new leader is elected, the backlog of accumulated commands is processed, causing a throughput spike before returning to steady state.
QuePaxa exhibits similar behavior despite employing multiple leaders via a hedging schedule: by default, only the first leader proposes in the crash-free case, and the second leader begins proposing only \emph{after the configured hedging timeout}.
This design still creates a period of zero throughput during leader transition, followed by a spike when the new leader processes the backlog.

\para{Multi-leader protocols avoid downtime}
Both EPaxos and \sysname exhibit no visible downtime under replica crashes.
As multi-leader protocols, the failure of one replica does not affect the ability of the remaining replicas to make progress.
\sysname avoids explicit view changes and promptly discards blocks from crashed replicas, preventing head-of-line blocking in the commit sequence.
This ensures that even if a crashed replica was previously a leader, its uncommitted blocks do not stall the system's progress.

%% file: sections/evaluation/random-async.tex
\subsection{Performance under random asynchrony}\label{subsec:eval-random}

\begin{figure}[t]
    \centering
    \includegraphics[width=\columnwidth]{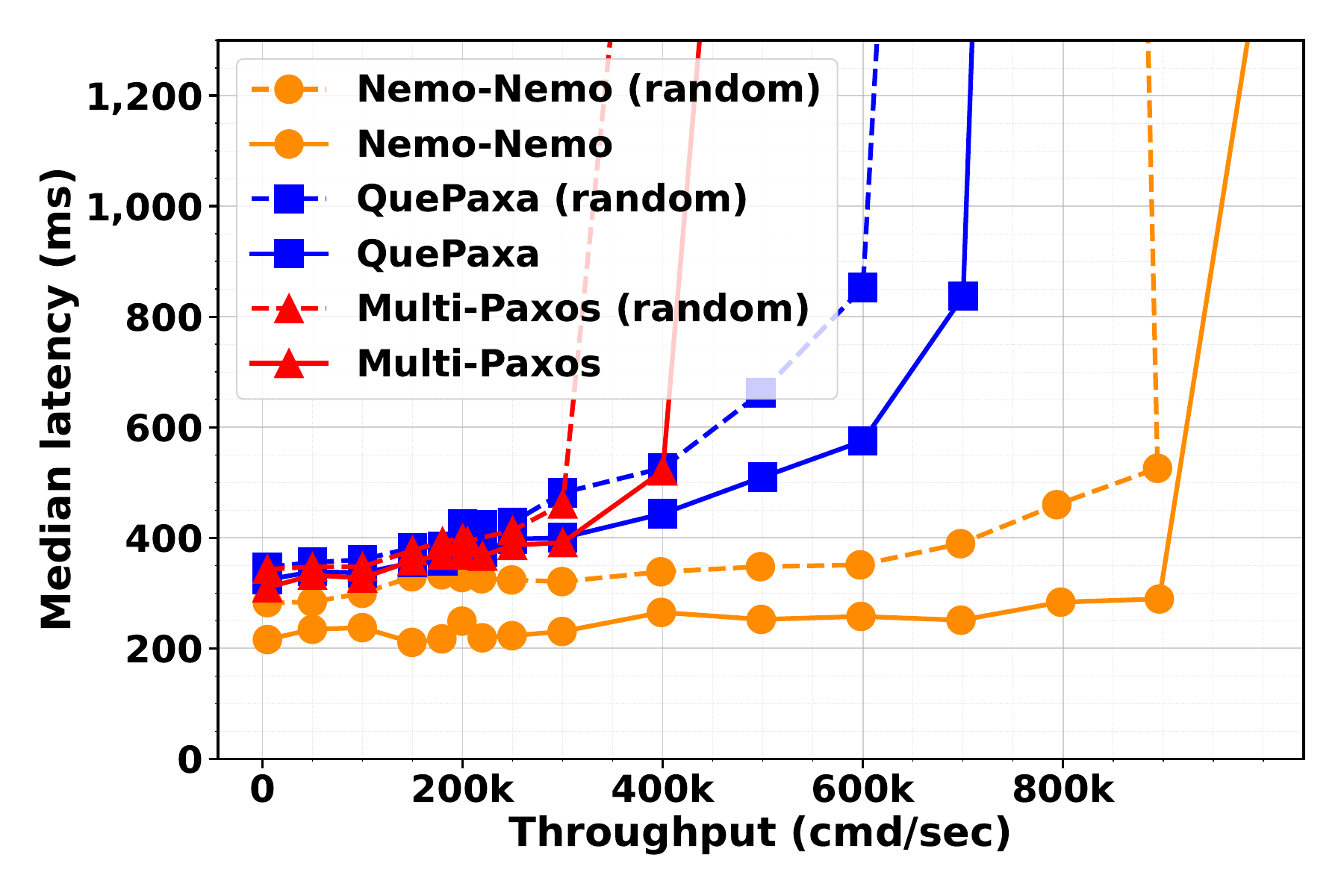}
    \vskip -1em
    \caption{
        Performance under normal conditions and random asynchrony with 5 replicas.
    }
  \label{fig:random_case_5_replica_throughput_latency_new_setup}
\end{figure}

We evaluate \sysname's performance when experiencing random message delays, leveraging the benchmarking framework presented in \Cref{sec:implementation}.
To amplify the effects of network randomization, we deploy 5 replicas in Cape Town (af-south-1), London (eu-west-2), Ireland (eu-west-1), Milan (eu-south-1), and Hyderabad (ap-south-2).
This setup includes three replicas in Europe (forming a fast majority) and two geographically distant replicas, creating significant latency variance.
\Cref{fig:random_case_5_replica_throughput_latency_new_setup} compares throughput and median latency for both normal conditions and random asynchrony.

\para{Single leader-based protocols suffer under asynchrony}
Multi-Paxos degrades from 400k to 300k\,cmd/sec while
QuePaxa drops from 500k to 400k\,cmd/sec, under 500ms latency bound.
This degradation stems from their reliance on single leader-based dissemination: all commands must flow through a single leader to its quorum.
Under random asynchronous network conditions, the leader or its quorum members experience slow message delivery, creating a bottleneck that degrades overall system performance.

\para{DAG-based dissemination provides robustness}
In stark contrast, \sysname maintains stable throughput at 900k\,cmd/sec, showing no throughput degradation despite network randomization.
This results in \sysname outperforming Multi-Paxos by 3$\times$ and QuePaxa by 2.25$\times$, in throughput, under random asynchrony.
\sysname's robustness arises from its DAG-based architecture: all replicas disseminate commands concurrently, creating multiple parallel paths to commitment.
\sysname adapts to network conditions without performance degradation.

%% file: sections/evaluation/bft.tex
\subsection{Comparison against BFT protocols}\label{subsec:eval-bft}

\begin{figure}[t]
    \centering
    \includegraphics[width=\columnwidth]{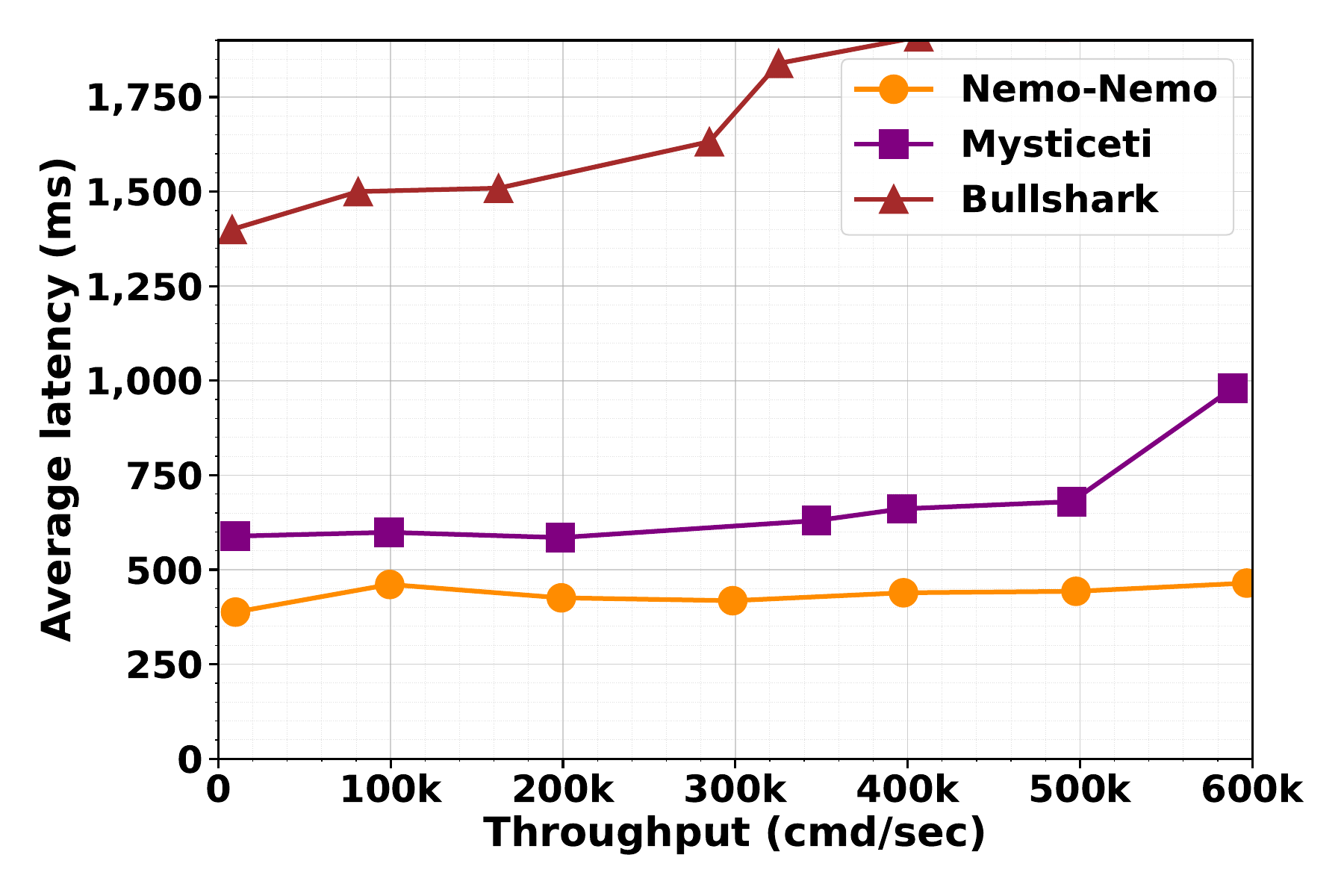}
    \vskip -1em
    \caption{
        Comparison against state-of-the-art BFT DAG-based protocols with 10 replicas and 32\,B commands.
    }
    \label{fig:best_case_10_replica_bft}
\end{figure}

We compare \sysname against state-of-the-art BFT DAG-based consensus protocols to understand how our CFT-optimized design compares to systems designed for BFT.
We deploy 10 replicas with 32\,B commands and compare against two popular BFT DAG protocols: Bullshark and Mysticeti.
Bullshark is a partially synchronous protocol using Narwhal as its DAG layer.
Mysticeti is an uncertified DAG protocol achieving an optimal three-round commitment.
Since both Bullshark and Mysticeti report average latency, we depict the average latency in this evaluation.
\Cref{fig:best_case_10_replica_bft} shows the latency-throughput trade-off.

\para{Throughput}
\sysname and Mysticeti achieve at least 600k\,cmd/sec throughput, as both employ lightweight DAG-based architectures for parallel block processing.
Bullshark attains only 450k\,cmd/sec (25\% lower) due to its heavier Narwhal-based certified DAG design.

\para{Latency advantages}
More importantly, \sysname achieves substantially lower latency than both BFT protocols: 1200\,ms lower than Bullshark and 180\,ms lower latency than Mysticeti.
This latency advantage stems from two key differences.
First, \sysname's typical commit path requires only 2 message delays, fewer than both Mysticeti (3 message delays) and Bullshark (6 message delays).
Second, \sysname targets crash faults and avoids cryptographic operations entirely, while BFT protocols must authenticate every message with signatures and hashes to defend against Byzantine behavior.
The combination of fewer message delays and zero cryptographic overhead enables \sysname to achieve significantly lower latency than BFT alternatives.

%% file: sections/evaluation/cpu.tex
\subsection{Resource utilization}\label{subsec:eval-resources}

\begin{figure}[t]
    \centering
    \includegraphics[width=\columnwidth]{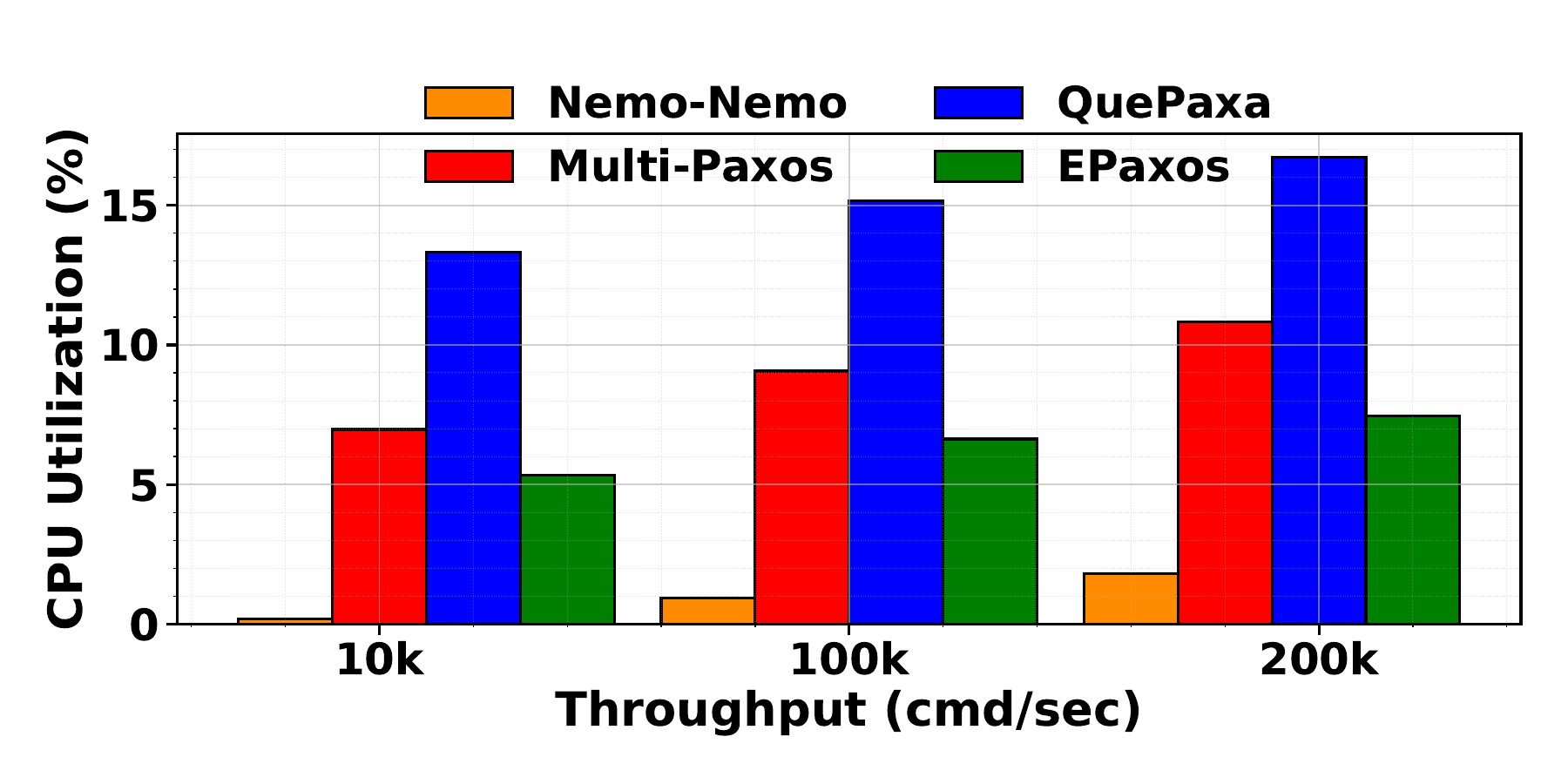}
    \vskip -1em
    \caption{Average CPU utilization across all 5 replicas.}
    \label{fig:cpu_utilization_load}
\end{figure}

We measure the CPU utilization of \sysname and follow Matte et al.~\cite{matte2021scalable}, who use CPU utilization as a proxy for resource consumption.
We record per-replica CPU utilization once per second with Go's ``gopsutil'' tool~\cite{gopsutil}, aggregate the readings across replicas, and report the mean per-replica utilization in \Cref{fig:cpu_utilization_load}.

\Cref{fig:cpu_utilization_load} shows that \sysname uses only $1\%$ CPU at 100k\,cmd/sec, compared to $9\%$, $15\%$, and $7\%$ for Multi-Paxos, QuePaxa, and EPaxos.\footnote{QuePaxa adopts a gRPC-based multi-threaded design, while Multi-Paxos uses a single-threaded one, which increases QuePaxa’s CPU usage.}
The same pattern holds at 10k and 200k\,cmd/sec.
Multi-Paxos, QuePaxa, and EPaxos consume more CPU because they process more messages: \sysname broadcasts one message per block, while existing protocols require at least two propose–vote rounds to commit a batch.
This extra communication directly increases their CPU cost, whereas \sysname avoids it.

%% file: sections/related-work.tex
\section{Related Work}\label{sec:related-work}


\para{Leader-based protocols}
Most deployed consensus protocols rely on a leader to order requests and achieve one-round-trip normal-case commit latency~\cite{lamport2001paxos,oki1988viewstamped,ongaro2014search}.
While simple and efficient in favorable conditions, leader-based protocols suffer from two fundamental limitations: all client requests must funnel through a single leader, limiting throughput scalability, and performance degrades significantly under asynchronous or poor network conditions when the leader or its closest quorum experiences delays.
As shown in~\Cref{fig:best_case_5_replica_throughput_latency,fig:scalability_num_replicas,fig:scalability_message_size,fig:random_case_5_replica_throughput_latency_new_setup}, \sysname outperforms leader-based protocols in both throughput and latency by distributing load evenly across replicas, providing robustness under network randomization.

\para{Multi-leader protocols}
To address the leader bottleneck, several protocols enable multiple concurrent leaders.
Mencius~\cite{barcelona2008mencius} statically partitions the replicated log across replicas, but its main drawback is that SMR progress depends on the slowest replica.
In contrast, \sysname progresses at the speed of the majority.
Generalized Paxos~\cite{lamport2005generalized} and EPaxos~\cite{moraru2013there} support multi-leaders by exploiting request dependencies and allowing partial order of commands.
Under low load, EPaxos achieves lower latency than \sysname, but under high load, the dependency-checking overhead becomes a bottleneck, degrading performance.
Fast Paxos~\cite{lamport2006fast} and Multi-coordinated Paxos~\cite{camargos2007multicoordinated} also permit multiple leaders but deliver suboptimal performance in practice~\cite{moraru2013there}.

\para{Parallel dissemination}
SADL-RACS~\cite{tennageracs} takes a different approach by decoupling command dissemination from the consensus critical path, following a design similar to Narwhal~\cite{narwhal-tusk}.
Unlike \sysname, SADL-RACS does not build a DAG; instead, it uses the SADL overlay to maintain a per-replica chain of blocks, which are committed when the next RACS block is committed.
This separation requires at least 5 network messages to commit a batch of commands, resulting in high latency overhead (see in~\Cref{fig:best_case_5_replica_throughput_latency}).
In contrast, \sysname embeds command dissemination directly into the DAG, avoiding extra message hops and achieving lower latency and higher throughput.

\para{DAG-based protocols}
DAG-based consensus protocols are recent advancements in Byzantine fault-tolerance (BFT), but have not been explored in the context of crash fault-tolerant (CFT) systems.
Existing BFT DAG-based protocols can be classified into two categories: certified DAGs~\cite{narwhal-tusk,bullshark} and uncertified DAGs~\cite{cordial-miners,mysticeti2023,mahi-mahi}.
\sysname is the first CFT protocol to adopt a DAG-based architecture, specifically using an uncertified DAG design.

DAG-Rider~\cite{dag-rider}, Tusk~\cite{narwhal-tusk}, Bullshark~\cite{bullshark}, and Dumbo-NG~\cite{dumbo-ng} are representative certified DAG protocols that use consistent broadcast to explicitly certify each DAG block~\cite{sok-dag}.
Fides~\cite{fides} is a TEE-assisted BFT DAG protocol that also uses certified blocks, achieving certification in two rounds without signatures by leveraging TEE guarantees.
Explicit certification ensures that equivocating blocks cannot exist, simplifying the commit rule.
However, certification adds at least 3 message delays per DAG round and increases latency, as seen in Bullshark's results in~\Cref{fig:best_case_10_replica_bft}.
It also raises bandwidth and CPU costs, since replicas must disseminate, receive, and verify cryptographic certificates.
To avoid these drawbacks, \sysname employs an uncertified DAG architecture.

Cordial Miners~\cite{cordial-miners}, Mysticeti~\cite{mysticeti2023}, Mahi-Mahi~\cite{mahi-mahi}, and BlueBottle~\cite{vos2025bluebottle} operate over uncertified DAGs, where each block is disseminated using best-effort to all replicas.
\sysname also builds on an uncertified DAG, but differs fundamentally from these BFT protocols in two key ways.
First, by targeting crash faults instead of Byzantine faults, \sysname's protocol structure requires one less DAG-round to commit, reducing message delays from 3 to 2.
Second, \sysname avoids expensive cryptographic operations such as signature generation and verification, and hash computation and verification.
The combination of fewer message delays and zero cryptographic overhead enables \sysname to achieve significantly lower latency than Mysticeti (see in~\Cref{fig:best_case_10_replica_bft}).

\para{Orthogonal goals}
\sysname focuses on high-performance, simple consensus, but does not address several other important goals:
\eg scalability via partitioning commands and state~\cite{moraru2013there,enes20state,peluso16making,ailijiang2017wpaxos},
shrinking quorum sizes~\cite{howard16flexible,ailijiang2017wpaxos},
exploiting WAN locality~\cite{ailijiang2017wpaxos,nawab18dpaxos},
storage optimization with erasure coding~\cite{uluyol20near,wang20craft},
or reducing replica load by outsourcing work~\cite{xu2019elastic}. We plan to explore how techniques from these complementary efforts can integrate with \sysname in future work.

%% file: acks.tex